\documentclass[journal,comsoc]{IEEEtran}

\usepackage{amsmath,amsfonts, amsthm, amssymb}
\usepackage{array}
\usepackage[caption=false,font=normalsize,labelfont=sf,textfont=sf]{subfig}
\usepackage{textcomp}
\usepackage{stfloats}
\usepackage{url}
\usepackage{verbatim}
\usepackage{graphicx}
\usepackage{cite}
\usepackage{xcolor}
\usepackage{bbm}
\usepackage{algorithm}
\usepackage{algpseudocode}
\usepackage{booktabs}
\theoremstyle{definition}
\newtheorem{thm}{Theorem}[section]
\newtheorem{defn}[thm]{Definition}
\newtheorem{lem}[thm]{Lemma}

\newtheorem{prop}[thm]{Proposition}

\newtheorem{rem}[thm]{Remark}

\newcommand{\argmax}{\operatornamewithlimits{arg\,max}}

\newcommand{\IASPA}[1]{\operatorname{IA\text{-}SPA}(#1)}

\newcommand{\lukas}[1]{\textbf{\textcolor{red}{[Lukas: #1]}}}

\begin{document}

\title{Optimal Transmitter Placement in Realistic Urban Environments}

\author{Lukas~Taus,
        Richard~Tsai, and~Jeffrey~G.~Andrews,~\IEEEmembership{Fellow,~IEEE}
\thanks{L. Taus and R. Tsai are with the Oden Institute for Computational Engineering and Sciences, The University of Texas at Austin, Austin, TX, USA (e-mail: l.taus@utexas.edu; ytsai@math.utexas.edu). Taus and Tsai are partially supported by NSF grant DMS-2208504, DMS-2513857, and Army Research Oﬃce Grant W911NF2320240.}%
\thanks{J. G. Andrews is with 6G@UT in the Wireless Networking and Communications Group and the Dept. of Electrical and Computer Engineering at the University of Texas at Austin, Austin, TX, USA (e-mail: jandrews@ece.utexas.edu).}%
\thanks{Manuscript last updated \today.}
\thanks{This work has been submitted to the IEEE for possible publication. Copyright may be transferred without notice, after which this version may no longer be accessible.}}



\maketitle

\begin{abstract}
In a wireless network, the spatial location of transmitters has a large impact on the achievable rate at each user location. Determining where to deploy transmitters—for example, cellular base stations—is a difficult non-convex problem, typically addressed using simplified propagation models and heuristics that only partially capture site-specific geometry and user demand. We propose a mathematically rigorous framework that instead incorporates detailed site-specific maps, spatial material properties, and realistic signal attenuation. We introduce an aggregated network-quality functional that scores receiver-weighted signal quality, impose deployment cost through cardinality and budget constraints, and establish submodularity of the resulting objective under practical conditions. To solve the resulting resource-constrained optimization problem for sparse, discrete transmitter configurations, we propose the Interference-Aware Submodular Placement Algorithm (IA-SPA) and establish a theoretical approximation guarantee relative to the global optimum of the surrogate objective. IA-SPA can incorporate existing base-station locations and prohibited deployment areas, making it applicable to both clean-slate and incremental deployments. We evaluate our approach using a ray-tracing-based simulation framework applied to 3D maps of San Francisco and Florence, and compare against known base-station deployments by AT\&T, T-Mobile, and Iliad. The proposed placement strategy achieves significant increases in mean data rate (about 2x) and edge rate (2–8x) compared with existing base-station deployments using the same number of transmitters. These gains persist under multiple simultaneous exclusionary zones, small-scale fading, perturbations to the material and geometric properties of the environment, and incremental densification of already substantial existing networks. The resulting pipeline has linear computational dependence on the
number of candidate sites for fixed receiver discretization and
deployment size, while wall-clock measurements demonstrate practical
runtime at city scale.

\end{abstract}

\begin{IEEEkeywords}
Base station placement, Site-Specific Optimization, Combinatorial Optimization, Ray Tracing, Digital Twins.
\end{IEEEkeywords}

\section{Introduction}

In the 6G era, cellular networks will continue to slowly densify and support new spectral bands, while also becoming more amenable to data-driven optimization.    Given strong constraints on network cost and energy consumption~\cite{AndHum25}, the placement of each new base station should attempt to maximize the coverage and capacity enhancement it provides to the network.   The rapidly improving technologies of network digital twins, based on increasingly sophisticated site-specific mapping and ray tracing, provide a promising pathway to improved network deployment and configuration~\cite{yildiz2025digital,AhmDigTwin25}. For these emerging tools to be able to enable significant increases in cell edge rate and median rate through improved network deployment, they need to be harnessed by well-designed, flexible and principled optimization algorithms that are able to leverage their capabilities and approach global optima.

However, optimizing something as seemingly straightforward as base station (BS) placement is in fact very complicated.  The optimal BS locations depend on not only the site-specific topology and signal propagation, but also on the user locations and demand probability density functions, and other constraints such as prohibited regions or respecting existing BS deployments.   Furthermore, while each new BS increases signal quality in its immediate vicinity, it causes new interference to all the neighboring cells, and changes the association regions (cell boundaries).  Thus, the performance of users in adjacent cells are strongly coupled to one another, which renders the globally optimal placement of transmitters non-convex and computationally intractable.  The goal of the present paper is to provide a general and rigorous approach to BS placement, one that is able to ingest detailed site-specific information and propagation modeling, but is also computationally feasible.  We focus on the downlink of a cellular network, but the framework is broadly applicable to the placement of any set of radiating transmitters.  

\subsection{Relation to Prior Work}\label{sec:related}

When the signal frequencies are sufficiently high relative to the domain features, wireless transmitter placement is, to first order, a geometric problem. Foundational results from the \emph{art gallery problem}~\cite{orourke1987art, bjorling1995, ghosh2008} establish bounds for covering polygonal environments, serving as a proxy for signal presence in high-frequency or aerial scenarios~\cite{cho2025optimal}. However, since finding optimal configurations is NP-hard~\cite{lee1986computational}, practical scalability requires greedy strategies~\cite{krause2011submodular}. Early efforts by Amaldi et al.~\cite{1230131} used randomized heuristics for UMTS planning, but relied on simplified propagation models and constant interference thresholds that lacked uniqueness. This critique extends to classical facility-location and coverage-optimization formulations more broadly (e.g., maximal covering location problems, set-covering location, and integer-programming variants~\cite{mathar2000integer}): they share the same reliance on simplified, distance- or binary-threshold-based propagation and cannot natively accommodate ray-traced, material-aware signal attenuation or the submodular performance guarantee we establish in Section~\ref{sec:systemmodel}.



In both robotics and communications, greedy algorithms are often used for managing the complexity of coverage problems. In robotics, \emph{next-best-view} and exploration algorithms are used to optimize gain measures like unexplored volume or surface area~\cite{bircher2018receding,gonza2002navigation, 8794426, popovic2021volumetric}. In addition, multi-coverage objectives have been explored to make the sensor network more robust against failure~\cite{taus2024}. However, a critical gap is the lack of detailed analysis on the role of interference. Most existing greedy frameworks optimize for simple area coverage or connectivity, but particularly in urban networks, interference plays a critical role near the cell edge in degrading the signal quality, i.e. the Signal-to-Interference plus Noise Ratio (SINR). 


Physically grounded optimization is increasingly enabled by digital twins and ray-tracing frameworks like Sionna RT~\cite{hoydis2022sionna}, which incorporate site-specific material properties and antenna orientations~\cite{aram2024sitespecificoutdoorpropagationassessment, yildiz2025digital}. Such high-fidelity modeling is critical in mmWave and Reconfigurable Intelligent Surface (RIS) networks~\cite{uwaechia2020, renzo2020ris, 10118599, 11309718, zhi2024ris, an2024hapswipt}, including in satellite and space-air-ground integrated systems~\cite{lin2025sagin, he2022satellite}. As a data-efficient, lower-complexity alternative to full ray tracing, scatterer-model-driven channel gain maps have also been used to guide network topology and RIS deployment decisions~\cite{sun2026irs, sun2026cgm}; our framework is agnostic to the specific choice of propagation model and can equally ingest such map-based approximations in place of $P(y,x)$ when ray-traced fields are unavailable or too costly to generate.
Gradient-based methods have also been explored for transmitter placement~\cite{Belgiovine_2026}. Such approaches optimize a continuous, generally non-convex objective and can depend on initialization; they also typically require the deployment size to be specified in advance. Direct SINR optimization remains non-convex~\cite{qian2008mapelachievingglobaloptimality, 4275017, 1626432, 1019292, 4469894}, so local optimization does not in general provide a global performance guarantee. In contrast, our formulation optimizes over a discrete candidate set using a submodular surrogate objective. This yields an explicit approximation guarantee for the greedy placement rule, while the threshold-constrained formulation can determine the number of added sites adaptively.


To bypass the computational cost of high-fidelity simulation, various acceleration techniques have been proposed, ranging from generative AI~\cite{chen2021fastrack, khoramnejad2024generativeaioptimizationnextgeneration} and neural surrogates~\cite{pasqual2024optimized} to DRL-based placement for aerial and RIS-assisted networks~\cite{9149258, 9700536, 10847914}. Other strategies employ Bayesian optimization~\cite{10697138}, genetic algorithms~\cite{10.1007/s42979-022-01533-y}, or multi-armed bandits~\cite{erden2020outdoormmwavebasestation} for transmission and site design. Additionally, terrain-aware frameworks utilize radio tomographic imaging~\cite{das2024skyscale, 9746987, 10486853} or global connectivity maps~\cite{11310308, 10373821} to optimize real-time node movement in site-specific channels.

The contribution of this paper lies in combining submodular greedy optimization with site-specific propagation information. Unlike iterative parameter tuning~\cite{10254476} or DRL-based deployment under EMF and coordination constraints~\cite{mallik2025basestationdeploymentemf, 11311373}, our framework focuses on discrete site selection and separates propagation-map generation from the combinatorial placement step. This structure permits detailed ray-traced fields to be reused throughout the optimization and provides a formal approximation guarantee for the surrogate network-quality objective.

\subsection{Contributions}\label{sec:contributions}

We formulate transmitter placement using site-specific propagation fields and a receiver-weighted network-quality objective. The functional $\mathcal{S}(T;\mathcal{P})$ measures utility derived from aggregated received power under a prescribed spatial demand density, while deployment cost is imposed separately through cardinality or resource constraints.

We then develop the Interference-Aware Submodular Placement Algorithm (IA-SPA), a greedy method for constructing sparse, discrete transmitter configurations. The principal contributions are as follows.

\textbf{Optimization Framework}: We formulate resource-constrained transmitter placement directly over a discrete set of feasible sites while retaining detailed propagation information generated by a site-specific model. The resulting greedy algorithm has a provable approximation guarantee for the surrogate objective.

\textbf{Network-Quality Functional}: We introduce $\mathcal{S}(T;\mathcal{P})$ and analyze it for two aggregation surrogates, $\mathcal{P}_{\mathrm{MAX}}$ and $\mathcal{P}_{\mathrm{SUM}}$. Under a concave non-decreasing utility map, the resulting set function is monotone and submodular, which provides the diminishing-returns structure used by IA-SPA.

\textbf{Site-Specific Validation}: We evaluate the method using Sionna RT ray tracing in 3D models of San Francisco and Florence and compare against reference deployments from AT\&T, T-Mobile, and Iliad. In the San Francisco AT\&T comparison, for example, $\IASPA{\mathcal{P}_{\mathrm{MAX}}}$ increases mean data rate by about $70\%$ and edge rate by about $88\%$ while slightly reducing mean interference.

\textbf{Constrained and Incremental Deployments}: The framework permits arbitrary feasible candidate sets, including multiple non-contiguous exclusion zones, and can optimize new sites while holding an existing deployment fixed. This allows the same algorithmic formulation to address both clean-slate placement and incremental network expansion.

\textbf{Robustness and Scalability}: Monte Carlo and environment-perturbation experiments show that the observed placement advantage persists under the small-scale fading and material, clutter, and geometric perturbations considered here. For fixed receiver discretization and deployment size, the greedy stage is linear in the number of candidate sites, while radio-map generation is embarrassingly parallel across candidates. Direct wall-clock measurements demonstrate practical runtime at the city scale considered in this study.

\subsection{Notation and Organization}\label{sec:notation}
Throughout this paper, we use the following mathematical notation:
\begin{itemize}
    \item $\Omega \subseteq \mathbb{R}^n$: A compact set representing the physical environment.
    \item $x, y \in \Omega$: Locations within the domain where $y$ typically denotes a receiver position and $x$ a  transmitter position.
    \item $\mu(x)$: Material parameters characterizing local signal propagation (e.g., absorption, reflection).
    \item $P(y, x)$: Received power at $y$ from a transmitter at $x$; in the numerical experiments it is expressed in noise-normalized units.
    \item $T = \{t_1, \dots, t_n\}$: A discrete set of transmitter locations.
    \item $\mathcal{P}_\text{SUM}(y, T)$ and $\mathcal{P}_\text{MAX}(y, T)$: Two aggregation surrogates constructed from the single-transmitter fields $P(y,x)$.
    \item $f(y)$: A normalized non-negative spatial density used to weight receiver locations; equivalently, it may be interpreted as a probability density for user location.
    \item $w(\kappa)$: A weight function representing the marginal utility of network quality $\kappa$.
    \item $\bar{W}(x) = \int_0^x w(\kappa) d\kappa$: The anti-derivative of $w$.
    \item $\mathcal{S}(T;\mathcal{P})$: The aggregated network quality functional for aggregation measure $\mathcal{P}$; when $\mathcal{P}$ is clear from context, we write $\mathcal{S}(T)$.
    \item $\mathcal{G}_{\mathcal{P}}(x\mid T)$: The marginal gain from adding candidate $x$ to $T$ under aggregation measure $\mathcal{P}$.
    \item $\IASPA{\mathcal{P}}$: IA-SPA instantiated with aggregation measure $\mathcal{P}$.
\end{itemize}

\textbf{Organization:} Section~\ref{sec:systemmodel} establishes the system model, detailing the propagation environment, signal aggregation metrics, and the formal optimization problem. Section~\ref{sec:alg} introduces the proposed IA-SPA framework, alongside an analysis of its theoretical performance guarantees and computational complexity. Section~\ref{sec:num_results} presents a comprehensive numerical evaluation using high-fidelity 3D ray tracing in San Francisco and Florence, exploring diverse deployment scenarios including exclusionary zones and incremental network expansion. Finally, Section~\ref{sec:conclusion} discusses the broader implications of our findings and provides concluding remarks.

\section{System Model and Problem Formulation}\label{sec:systemmodel}

Let $\Omega \subseteq \mathbb{R}^n$ be a compact set representing the environment. We allow spatially varying material parameters $\mu(x)$, which determine how signals interact with buildings, terrain, and other objects through effects such as absorption, reflection, refraction, and transmission.

To illustrate the formulation, we use the simple urban scene shown in Fig.~\ref{fig:toy_map}. The numerical experiments in Section~\ref{sec:num_results} use the substantially more detailed San Francisco and Florence scenes shown in Fig.~\ref{fig:sf_map}.

\begin{figure}
    \centering
    \includegraphics[width=0.32\linewidth]{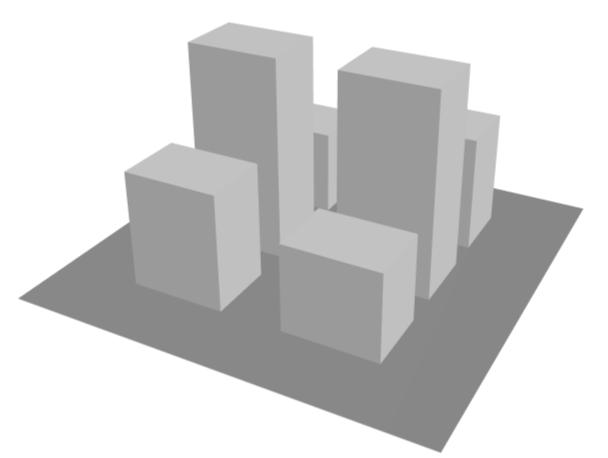}
    \includegraphics[width=0.20\linewidth]{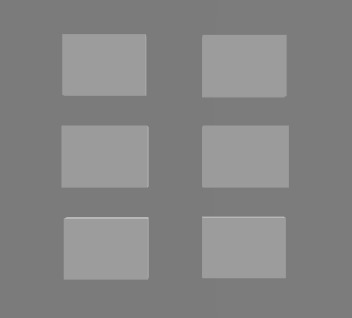}
    \includegraphics[width=0.20\linewidth]{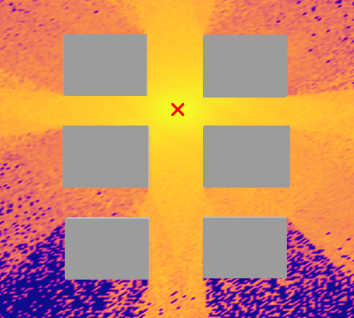}
    \caption{Toy urban scene used to illustrate the formulation. Left:
    3D view. Center: top-down view. Right: achievable-rate field
    induced by a single transmitter (red marker), computed from the
    site-specific propagation field $P(y,x)$.}
    \label{fig:toy_map}
\end{figure}

\subsection{Propagation Model}
We define the propagation function $P(y,x)$ as the power received at location $y\in\Omega$ from a transmitter at location $x\in\Omega$. The function depends on the geometry of the entire environment and on its spatially varying material parameters $\mu$. It therefore encodes attenuation and the effects of interactions with buildings, terrain, and other objects. Once the noise power and interference are specified, $P(y,x)$ determines the SNR, SINR, and ultimately the achievable data rate at $y$.

The appropriate model for $P(y,x)$ depends on the desired balance between computational cost and physical fidelity. Three common paradigms, arranged here in order of increasing physical realism, are:

\begin{enumerate}
    \item \textbf{Line-of-sight (LoS) visibility models.} These geometric models classify a transmitter--receiver pair according to whether the direct path is obstructed, often producing binary covered/occluded regions or augmenting this classification with a distance-based path-loss law~\cite{cho2025optimal}. They are inexpensive, but do not explicitly resolve energy arriving along indirect paths.

    \item \textbf{Ray-tracing models.} These high-frequency approximations trace multiple propagation paths through a site-specific scene and account for interactions such as reflection, diffraction, transmission, and scattering at material boundaries. They capture multipath propagation and blockage-induced attenuation while remaining substantially less expensive than a full-wave solution.

    \item \textbf{Full-wave models.} These models numerically solve Maxwell's equations, resolving wave phenomena without the high-frequency approximations underlying geometric ray tracing. They provide the greatest physical fidelity among the three paradigms, but their computational and memory costs generally limit their use in large urban placement studies.
\end{enumerate}

Our analysis is agnostic to which of these paradigms, or which empirical or learned alternative, is used to construct $P(y,x)$. For the numerical experiments in this paper, we compute $P(y,x)$ using the site-specific Sionna RT ray tracer~\cite{hoydis2022sionna}. We normalize the noise power to $\sigma^2=1$ and express $P(y,x)$ in noise-normalized units; consequently, $P(y,x)$ is numerically equal to the single-link SNR. This normalization does not affect the placement formulation.

The right panel of Fig.~\ref{fig:toy_map} illustrates the spatial structure induced by one such propagation field. The irregular rate pattern reflects site geometry and material interactions that cannot be represented by distance or binary visibility alone.

\subsection{Network Quality and SINR}
For a transmitter set $T=\{t_1,\ldots,t_n\}$, we assume that each receiver associates with the transmitter providing the largest received power. The resulting SINR is
\begin{equation}
    \mathrm{SINR}(y,T)
    =
    \frac{\max_{t\in T}P(y,t)}
    {\sum_{t\in T}P(y,t)-\max_{t\in T}P(y,t)+\sigma^2}.
\end{equation}
Thus, the strongest received signal contributes useful power, while all other active transmitters contribute interference. Directly optimizing this SINR over transmitter locations leads to a difficult non-convex subset-selection problem~\cite{qian2008mapelachievingglobaloptimality, 4275017, 1626432, 1019292, 4469894}. We therefore introduce two tractable aggregation surrogates. The final deployments are nevertheless evaluated using the physical SINR above, as described in Section~\ref{sec:num_results}.

\begin{defn}[Aggregated Network Quality]\label{def:agg_quality}
Given the single-transmitter propagation field $P(y,x)$ and a transmitter set $T$, define
\begin{enumerate}
    \item \textbf{Maximum received-power surrogate} ($\mathcal{P}_{\mathrm{MAX}}$):
    \begin{equation}
        \mathcal{P}_{\mathrm{MAX}}(y,T)=\max_{x\in T}P(y,x).
    \end{equation}
    This surrogate retains only the strongest received signal. Under the association model above, it therefore coincides with the useful-signal term in the SINR numerator, while ignoring interference.

    \item \textbf{Total received-power surrogate} ($\mathcal{P}_{\mathrm{SUM}}$):
    \begin{equation}
        \mathcal{P}_{\mathrm{SUM}}(y,T)=\sum_{x\in T}P(y,x).
    \end{equation}
    This surrogate rewards all received power, without distinguishing the serving transmitter from interferers. It is therefore an optimization proxy rather than an upper bound on the achievable data rate.
\end{enumerate}
\end{defn}

The two aggregation measures differ fundamentally in how they value overlapping coverage. Under $\mathcal{P}_{\mathrm{MAX}}$, adding a transmitter improves the network quality at a receiver only if it provides a stronger signal than the best transmitter already serving that receiver. In contrast, $\mathcal{P}_{\mathrm{SUM}}$ assigns positive value to any additional received power, even in regions that are already well served by other transmitters. Thus, $\mathcal{P}_{\mathrm{MAX}}$ naturally emphasizes improvements to the strongest serving link, whereas $\mathcal{P}_{\mathrm{SUM}}$ emphasizes total received power. As shown in Section~\ref{sec:alg}, this distinction is also reflected mathematically: $\mathcal{P}_{\mathrm{MAX}}$ is monotone submodular, while $\mathcal{P}_{\mathrm{SUM}}$ is modular.

\subsection{Accommodating Deployment Constraints}

In practical network planning, environmental, regulatory, or infrastructural constraints may prohibit deployment over part of the domain. We represent these restrictions directly through the feasible candidate set $\mathcal{X}\subset\Omega$ and require every newly selected transmitter to lie in $\mathcal{X}$. No modification of the propagation field $P(y,x)$ is needed: an infeasible site is simply excluded from the optimization.

Furthermore, the algorithm naturally supports \textit{incremental deployment}. Let $T_{\mathrm{fixed}}$ denote a set of pre-existing transmitters whose locations remain fixed. If at most $k$ new transmitters may be added, the optimization problem becomes
\begin{align}
    \max_{T\subseteq\mathcal{X},\,|T|\leq k}
    \mathcal{S}(T\cup T_{\mathrm{fixed}};\mathcal{P}).
\end{align}
Because fixing an initial set preserves monotonicity and submodularity of the marginal objective, $\IASPA{\mathcal{P}}$ can optimize the added sites while accounting for the propagation footprint of the existing deployment. The same formulation can also enforce mandatory pre-selected sites.

\subsection{Optimization Problem Formulation}\label{sec:optform}

Our goal is to select a small set of transmitter locations that provides high network quality over the service domain $\Omega$.

In practice, demand is spatially heterogeneous. We therefore introduce a non-negative priority density $f:\Omega\to\mathbb{R}_+$ and normalize it so that
\begin{align}
    \int_\Omega f(y)\,dy = 1.
\end{align}
Thus, $f$ can be interpreted as the probability density of user location, while still allowing higher weight to be assigned to high-demand regions such as public squares, transit hubs, or office complexes. A non-normalized priority field can be rescaled to this form; this does not change the ordering of transmitter configurations and only rescales the utility threshold in problem~\eqref{eq:min_t_st_cov}. Throughout, $f$ is treated as a fixed demand profile; adapting transmit powers or scheduling to instantaneous traffic variation is outside the scope of this paper.

We next introduce a non-negative weight function $w(\kappa)$ and its antiderivative
\begin{align}
    \bar W(x)=\int_0^x w(\kappa)\,d\kappa.
\end{align}
The value $w(\kappa)$ represents the marginal utility assigned to an increase in aggregated network quality at level $\kappa$. For each aggregation measure considered below, we assume
\begin{align}\label{eq:M_bound}
    M:=\sup_{y\in\Omega}\mathcal{P}(y,\mathcal{X})<\infty
\end{align}
and $w\in L^1(0,M)$. Since $\mathcal{P}(y,T)\leq \mathcal{P}(y,\mathcal{X})$ for $T\subseteq\mathcal{X}$,
\begin{align}
    0\leq \bar W(\mathcal{P}(y,T))
    \leq \int_0^M w(\kappa)\,d\kappa<\infty.
\end{align}

\begin{rem}
Examples satisfying the assumptions used below include
\begin{align}
    \bar W(x)=\log(1+x)
    \qquad\text{and}\qquad
    \bar W(x)=\frac{x}{x+c},\quad c>0.
\end{align}
\end{rem}

Taking all these considerations into account, we can formulate a function which measures the value of a set of transmitters $T$ using the function

\begin{align} \label{eq:S_def}
    \mathcal{S}(T;\mathcal{P}) = \int_0^M w(\kappa) \int_{\Omega}\mathbbm{1}_{\{\mathcal{P}(y, T) > \kappa\}} f(y) dy d\kappa.
\end{align}
We write the dependence on the aggregation measure $\mathcal{P}$ explicitly
when it is relevant; when $\mathcal{P}$ is clear from context, we use the
shorthand $\mathcal{S}(T)$.

For each fixed admissible aggregation measure $\mathcal{P}$, the functional is monotone:
\begin{align}
    T_1\subseteq T_2
    \implies
    \mathcal{S}(T_1;\mathcal{P})
    \leq
    \mathcal{S}(T_2;\mathcal{P}).
\end{align}
This follows from monotonicity of the two aggregation measures
$\mathcal{P}_{\mathrm{MAX}}$ and $\mathcal{P}_{\mathrm{SUM}}$ introduced in Definition~\ref{def:agg_quality}; throughout the paper, the theoretical claims are restricted to these two choices.

Two natural resource-constrained formulations are then
\begin{align}\label{eq:min_t_st_cov}
  \min_{T\subseteq\mathcal{X}} |T|
  \quad \text{s.t.}\quad
  \mathcal{S}(T;\mathcal{P})\geq\beta,
\end{align}
which minimizes the number of transmitters required to reach a prescribed utility level, and
\begin{align}\label{eq:max_cov_st_t}
  \max_{T\subseteq\mathcal{X}} \mathcal{S}(T;\mathcal{P})
  \quad \text{s.t.}\quad
  |T|\leq\beta,
\end{align}
which maximizes utility under a cardinality budget.

\begin{thm}\label{thm:S_formula}
Let $X$ be a random variable on $\Omega$ with density $f$. Then
\begin{align}
    \mathcal{S}(T;\mathcal{P})
    =
    \mathbb{E}_X\!\left[
        \bar W\!\left(\mathcal{P}(X,T)\right)
    \right].
\end{align}
\end{thm}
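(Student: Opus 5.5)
The identity follows from the layer-cake representation combined with Tonelli's theorem. The plan is to swap the order of integration in \eqref{eq:S_def}, and then recognize the inner $\kappa$-integral as $\bar W$ evaluated at $\mathcal{P}(y,T)$.

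\textbf{Step 1: measurability and non-negativity.} Consider the integrand $(\kappa,y)\mapsto w(\kappa)\,\mathbbm{1}_{\{\mathcal{P}(y,T)>\kappa\}}\,f(y)$ on $[0,M]\times\Omega$. It is non-negative, because $w\ge 0$ and $f\ge 0$. I will assume $y\mapsto P(y,x)$ is measurable for each $x$; the paper implicitly requires this for \eqref{eq:S_def} to make sense. Then $y\mapsto \mathcal{P}(y,T)$ is measurable, since $T$ is finite and it is a finite max or a finite sum of measurable functions. The set $\{(\kappa,y):\mathcal{P}(y,T)>\kappa\}$ is jointly measurable as a superlevel set of the measurable function $(\kappa,y)\mapsto \mathcal{P}(y,T)-\kappa$. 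Under these conditions Tonelli's theorem applies without any integrability check, so the two integrals may be exchanged:
\begin{align*}
\mathcal{S}(T;\mathcal{P})=\int_\Omega f(y)\left(\int_0^M w(\kappa)\,\mathbbm{1}_{\{\kappa<\mathcal{P}(y,T)\}}\,d\kappa\right)dy.
\end{align*}

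\textbf{Step 2: evaluate the inner integral.} Fix $y$. The bound \eqref{eq:M_bound} and monotonicity of the aggregation give $0\le \mathcal{P}(y,T)\le \mathcal{P}(y,\mathcal{X})\le M$. The inner integral is therefore $\int_0^{\mathcal{P}(y,T)} w(\kappa)\,d\kappa=\bar W(\mathcal{P}(y,T))$, which is finite because $w\in L^1(0,M)$. The distinction between $>$ and $\ge$ in the indicator only affects the single point $\kappa=\mathcal{P}(y,T)$, a Lebesgue-null set, so it plays no role. For $\mathcal{P}_{\mathrm{MAX}}$ and $\mathcal{P}_{\mathrm{SUM}}$ with $P\ge 0$, the aggregated value is non-negative, so the lower limit $0$ is correct.

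\textbf{Step 3: conclude.} We now have $\mathcal{S}(T;\mathcal{P})=\int_\Omega \bar W(\mathcal{P}(y,T))f(y)\,dy$. Since $X$ has density $f$, this integral is exactly $\mathbb{E}_X[\bar W(\mathcal{P}(X,T))]$. The expectation is finite because $\bar W(\mathcal{P}(y,T))$ is bounded by $\int_0^M w$ and $\int f=1$.

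The only step that needs care is the justification of Tonelli in Step 1, in particular joint measurability and the use of $M$ to truncate the $\kappa$-range. Once that is settled, the remaining steps are direct computations.
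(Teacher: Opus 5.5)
Your proof is correct and takes the paper's route: exchange the order of integration in \eqref{eq:S_def}, identify the inner $\kappa$-integral as $\bar W(\mathcal{P}(y,T))$, and read off the expectation. The differences are minor: you invoke Tonelli on the non-negative integrand where the paper cites Fubini after noting absolute integrability, and you state explicitly the bound $\mathcal{P}(y,T)\le M$ (for $T\subseteq\mathcal{X}$, $P\ge 0$) that the paper uses without comment when it replaces the upper limit $M$ by $\mathcal{P}(y,T)$.
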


\begin{proof}[Proof of Theorem~\ref{thm:S_formula}]
Since $f,w\geq 0$, $\int_{\Omega} f(y)\,dy=1$, and
\[
    \int_0^M w(\kappa)\,d\kappa < \infty,
\]
the integrand is absolutely integrable. By Fubini's theorem,
\begin{align}
\mathcal{S}(T;\mathcal{P})
&=
\int_\Omega f(y)
\int_0^M
w(\kappa)\,
\mathbbm{1}_{\{\mathcal{P}(y,T)>\kappa\}}
\,d\kappa\,dy \\
&=
\int_\Omega f(y)
\int_0^{\mathcal{P}(y,T)}
w(\kappa)\,d\kappa\,dy \\
&=
\int_\Omega
\bar W\!\left(\mathcal{P}(y,T)\right)f(y)\,dy \\
&=
\mathbb{E}_X\!\left[
\bar W\!\left(\mathcal{P}(X,T)\right)
\right].
\end{align}
\end{proof}

The two constrained problems in equations \eqref{eq:min_t_st_cov} and \eqref{eq:max_cov_st_t} are, in general, combinatorial (subset-selection) optimization problems and hence NP-hard~\cite{lee1986computational}. We now establish that $\mathcal{S}(\cdot;\mathcal{P})$ possesses a \emph{submodularity} (diminishing-returns) property. This is the key structural fact behind (i) the approximation guarantee we prove for the greedy, discrete IA-SPA in Section~\ref{sec:alg}, which holds regardless of any convexity structure, and (ii) a complementary convexity result for a continuous relaxation of the same problems, discussed in Remark~\ref{rem:usefulness}.

\begin{defn}[Submodularity]\label{def:submodular}
Let $\mathcal{X}$ be a finite ground set. A set function
$F:2^{\mathcal{X}}\to\mathbb{R}$ is \emph{submodular} if, for every
$A\subseteq B\subseteq\mathcal{X}$ and every
$t\in\mathcal{X}\setminus B$,
\begin{align}
    F(A\cup\{t\})-F(A)
    \geq
    F(B\cup\{t\})-F(B).
    \label{eq:submodular_def}
\end{align}
It is \emph{modular} if equality holds for all such $A,B,t$.
\end{defn}

\begin{lem}\label{lem:P_submodular}
For every $y\in\Omega$, with the convention
$\mathcal{P}_{\mathrm{MAX}}(y,\emptyset)=0$, the set functions
$T\mapsto\mathcal{P}_{\mathrm{SUM}}(y,T)$ and
$T\mapsto\mathcal{P}_{\mathrm{MAX}}(y,T)$ are monotone non-decreasing
on $2^{\mathcal{X}}$. Moreover,
$\mathcal{P}_{\mathrm{SUM}}(y,\cdot)$ is modular and
$\mathcal{P}_{\mathrm{MAX}}(y,\cdot)$ is submodular.
\end{lem}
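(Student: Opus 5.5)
Fix $y\in\Omega$ throughout and use that the single-transmitter field is non-negative, $P(y,x)\geq 0$ for all $x$, since it is a received power. Monotonicity and the (sub)modular inequality of Definition~\ref{def:submodular} then reduce to elementary facts about finite sums and maxima of non-negative reals. I will first treat monotonicity for both surrogates, then modularity of $\mathcal{P}_{\mathrm{SUM}}$, and finally submodularity of $\mathcal{P}_{\mathrm{MAX}}$, which is the only step requiring a case split.

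For monotonicity, let $T_1\subseteq T_2\subseteq\mathcal{X}$. For the sum, write $\mathcal{P}_{\mathrm{SUM}}(y,T_2)=\mathcal{P}_{\mathrm{SUM}}(y,T_1)+\sum_{x\in T_2\setminus T_1}P(y,x)$; the extra term is non-negative. For the maximum, if $T_1=\emptyset$ the convention gives $\mathcal{P}_{\mathrm{MAX}}(y,T_1)=0\leq\mathcal{P}_{\mathrm{MAX}}(y,T_2)$ by non-negativity of $P$. Otherwise a maximum over a larger finite set is at least as large. For modularity of $\mathcal{P}_{\mathrm{SUM}}$, observe that for any $A\subseteq\mathcal{X}$ and $t\notin A$ the increment is
\begin{align}
\mathcal{P}_{\mathrm{SUM}}(y,A\cup\{t\})-\mathcal{P}_{\mathrm{SUM}}(y,A)=P(y,t),
\end{align}
which does not depend on $A$. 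Hence the two sides of \eqref{eq:submodular_def} coincide for every admissible $A\subseteq B$ and $t\notin B$.

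For submodularity of $\mathcal{P}_{\mathrm{MAX}}$, set $a=\mathcal{P}_{\mathrm{MAX}}(y,A)$, $b=\mathcal{P}_{\mathrm{MAX}}(y,B)$ and $p=P(y,t)$. With the empty-set convention, $\mathcal{P}_{\mathrm{MAX}}(y,S\cup\{t\})=\max(\mathcal{P}_{\mathrm{MAX}}(y,S),p)$ holds for all $S$, including $S=\emptyset$, because $p\geq 0$. The marginal gains are therefore $(p-a)^+$ and $(p-b)^+$. By monotonicity, $a\leq b$. Since $s\mapsto(p-s)^+$ is non-increasing, it follows that $(p-a)^+\geq(p-b)^+$, which is exactly \eqref{eq:submodular_def}. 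If a more explicit argument is wanted, split into the cases $p\leq a$ (both gains are zero), $a<p\leq b$ (left side positive, right side zero), and $p>b$ (the gains are $p-a\geq p-b$).

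The only genuine subtlety is the identity $\max(\mathcal{P}_{\mathrm{MAX}}(y,\emptyset),p)=p$, which needs both the stated convention and $P\geq 0$. I will state the non-negativity assumption explicitly at the outset, since it is what makes the convention consistent with the recursive formula. Everything else is routine.
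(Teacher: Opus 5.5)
Your proof is correct and follows the paper's argument essentially step for step. For $\mathcal{P}_{\mathrm{SUM}}$, both use the non-negative increment $\sum_{x\in B\setminus A}P(y,x)$ and the set-independent marginal gain $P(y,t)$; for $\mathcal{P}_{\mathrm{MAX}}$, both use the marginal gain $\max\{0,P(y,t)-\mathcal{P}_{\mathrm{MAX}}(y,C)\}$, which shrinks as $C$ grows. Your explicit assumption $P(y,x)\geq 0$, needed for the empty-set case of that formula, is used only tacitly in the paper, so stating it is a worthwhile clarification.
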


\begin{proof}[Proof of Lemma~\ref{lem:P_submodular}]
Fix $A\subseteq B\subseteq\mathcal{X}$ and
$t\in\mathcal{X}\setminus B$. For the sum surrogate,
\begin{align}
\mathcal{P}_{\mathrm{SUM}}(y,B)-\mathcal{P}_{\mathrm{SUM}}(y,A)
=
\sum_{x\in B\setminus A}P(y,x)\geq0,
\end{align}
and the marginal gain of adding $t$ is
\begin{align}
\mathcal{P}_{\mathrm{SUM}}(y,A\cup\{t\})
-\mathcal{P}_{\mathrm{SUM}}(y,A)
=
P(y,t),
\end{align}
independent of the current set. Hence
$\mathcal{P}_{\mathrm{SUM}}(y,\cdot)$ is monotone and modular.

For the maximum surrogate, $A\subseteq B$ implies
\begin{align}
\mathcal{P}_{\mathrm{MAX}}(y,A)
\leq
\mathcal{P}_{\mathrm{MAX}}(y,B),
\end{align}
so it is monotone. Its marginal gain from adding $t$ to a set $C$ is
\begin{align}
\mathcal{P}_{\mathrm{MAX}}(y,C\cup\{t\})
-\mathcal{P}_{\mathrm{MAX}}(y,C)
=
\max\!\left\{0,\,
P(y,t)-\mathcal{P}_{\mathrm{MAX}}(y,C)\right\}.
\end{align}
Since the current maximum can only increase as $C$ grows, this
marginal gain can only decrease. Therefore
$\mathcal{P}_{\mathrm{MAX}}(y,\cdot)$ is submodular.
\end{proof}

In particular, the marginal gain of $\mathcal{P}_{\mathrm{SUM}}$ is
independent of the existing transmitter set, whereas the marginal gain
of $\mathcal{P}_{\mathrm{MAX}}$ decreases as the strongest existing
signal increases.

\begin{thm}\label{thm:concavity}
If $\bar W$ is concave and non-decreasing, then
$\mathcal{S}(\cdot;\mathcal{P})$ is monotone and submodular for each
$\mathcal{P}\in
\{\mathcal{P}_{\mathrm{MAX}},\mathcal{P}_{\mathrm{SUM}}\}$.
\end{thm}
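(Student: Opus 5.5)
The plan is to reduce everything to a pointwise statement at each receiver $y$ and then integrate. By Theorem~\ref{thm:S_formula},
\[
\mathcal{S}(T;\mathcal{P})=\int_\Omega \bar W\bigl(\mathcal{P}(y,T)\bigr)f(y)\,dy .
\]
Since $f\ge 0$ and integration preserves inequalities, monotonicity and the submodularity inequality \eqref{eq:submodular_def} for $\mathcal{S}$ follow once they hold for the integrand $T\mapsto \bar W(\mathcal{P}(y,T))$ at each fixed $y$. Integrability is not an issue, because every such term is bounded by $\int_0^M w$, as noted before the statement. So I fix $y$ and write $g(T)=\mathcal{P}(y,T)$. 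Lemma~\ref{lem:P_submodular} tells me that $g$ is monotone and either modular ($\mathcal{P}_{\mathrm{SUM}}$) or submodular ($\mathcal{P}_{\mathrm{MAX}}$); in both cases it is monotone submodular and nonnegative.

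\textbf{Monotonicity.} If $T_1\subseteq T_2$, then $g(T_1)\le g(T_2)$, and since $\bar W$ is non-decreasing, $\bar W(g(T_1))\le \bar W(g(T_2))$.

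\textbf{Submodularity.} Fix $A\subseteq B$ and $t\notin B$, and set
\[
a=g(A),\qquad b=g(B),\qquad \delta_A=g(A\cup\{t\})-a,\qquad \delta_B=g(B\cup\{t\})-b .
\]
From Lemma~\ref{lem:P_submodular} I have $a\le b$ and $0\le\delta_B\le\delta_A$. I need
\[
\bar W(a+\delta_A)-\bar W(a)\;\ge\;\bar W(b+\delta_B)-\bar W(b).
\]
I will prove this in two steps.
\begin{enumerate}
\item Because $\bar W$ is non-decreasing and $\delta_A\ge\delta_B$, we get $\bar W(a+\delta_A)-\bar W(a)\ge \bar W(a+\delta_B)-\bar W(a)$.
\item Concavity makes the increment $h\mapsto \bar W(h+\delta)-\bar W(h)$ non-increasing in $h$ for fixed $\delta\ge0$. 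This is the standard slope inequality for concave functions; since $\bar W(x)=\int_0^x w$ is absolutely continuous, it also follows from $w$ being a.e. non-increasing. Since $a\le b$, this gives $\bar W(a+\delta_B)-\bar W(a)\ge \bar W(b+\delta_B)-\bar W(b)$.
\end{enumerate}
Chaining the two inequalities yields the required one. All arguments stay in $[0,M]$ because $g(T)\le \mathcal{P}(y,\mathcal{X})\le M$ for $T\subseteq\mathcal{X}$, so $\bar W$ only needs to be concave on $[0,M]$.

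\textbf{Main obstacle.} The only delicate point is the second step, the composition of a concave non-decreasing scalar function with a monotone submodular set function. Both hypotheses are genuinely used: monotonicity of $\bar W$ in the first step, and concavity in the second. I would state the increment inequality carefully via the chord-slope characterization of concavity, covering the case $\delta_B=0$ trivially. For $\mathcal{P}_{\mathrm{SUM}}$ we have $\delta_A=\delta_B=P(y,t)$, so only the second step matters. For $\mathcal{P}_{\mathrm{MAX}}$ both steps are needed. Finally, integrating against $f$ completes the proof for both surrogates.
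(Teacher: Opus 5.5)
Your proposal is correct and follows essentially the same route as the paper. The paper also fixes $y$, sets $a,b,\delta_A,\delta_B$, uses Lemma~\ref{lem:P_submodular} to get $a\le b$ and $\delta_A\ge\delta_B\ge 0$, chains $D(a,\delta_A)\ge D(a,\delta_B)\ge D(b,\delta_B)$ via monotonicity and concavity of $\bar W$, and then integrates against $f$ using Theorem~\ref{thm:S_formula}; your remarks about staying in $[0,M]$ and about the $\mathcal{P}_{\mathrm{SUM}}$ special case are harmless refinements.
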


\begin{proof}[Proof of Theorem~\ref{thm:concavity}]
Monotonicity follows from monotonicity of $\mathcal{P}$ and
non-decreasingness of $\bar W$. To prove submodularity, fix
$A\subseteq B\subseteq\mathcal{X}$,
$t\in\mathcal{X}\setminus B$, and $y\in\Omega$. Set
\begin{align}
a&=\mathcal{P}(y,A),
&
b&=\mathcal{P}(y,B),\\
\delta_A
&=\mathcal{P}(y,A\cup\{t\})-\mathcal{P}(y,A),
&
\delta_B
&=\mathcal{P}(y,B\cup\{t\})-\mathcal{P}(y,B).
\end{align}
By Lemma~\ref{lem:P_submodular},
\begin{align}
a\leq b,
\qquad
\delta_A\geq\delta_B\geq0.
\end{align}

For $s,\delta\geq0$, define the increment
\begin{align}
D(s,\delta)=\bar W(s+\delta)-\bar W(s).
\end{align}
Because $\bar W$ is concave, $D(s,\delta)$ is non-increasing in $s$
for fixed $\delta$; because $\bar W$ is non-decreasing,
$D(s,\delta)$ is non-decreasing in $\delta$ for fixed $s$. Hence
\begin{align}
D(a,\delta_A)
\geq
D(a,\delta_B)
\geq
D(b,\delta_B).
\end{align}
Equivalently,
\begin{align}
&\bar W(\mathcal{P}(y,A\cup\{t\}))
-\bar W(\mathcal{P}(y,A))
\\
&\qquad\geq
\bar W(\mathcal{P}(y,B\cup\{t\}))
-\bar W(\mathcal{P}(y,B)).
\end{align}
Integrating this pointwise inequality against the non-negative density
$f(y)$ and using Theorem~\ref{thm:S_formula} gives
\begin{align}
\mathcal{S}(A\cup\{t\};\mathcal{P})
-\mathcal{S}(A;\mathcal{P})
\geq
\mathcal{S}(B\cup\{t\};\mathcal{P})
-\mathcal{S}(B;\mathcal{P}),
\end{align}
which proves submodularity.
\end{proof}

Theorem~\ref{thm:concavity} is the structural result used by the
greedy algorithm below. Concavity of $\bar W$ gives diminishing
utility to further improvements where the current aggregated network
quality is already high; thus equal increases in aggregated quality
are valued more strongly in comparatively poorly served regions.
This statement concerns the surrogate objective and does not, by
itself, imply geometric separation from existing transmitters or a
reduction in physical interference.

\begin{rem}[Usefulness of Theorem~\ref{thm:concavity}]
\label{rem:usefulness}
The primary role of Theorem~\ref{thm:concavity} is to provide the
diminishing-returns property used by
Proposition~\ref{prop:dim_returns} and
Theorem~\ref{thm:greedy_bound}. There is also a complementary
continuous relaxation for
$\mathcal{P}=\mathcal{P}_{\mathrm{SUM}}$. If the discrete transmitter
set is replaced by a non-negative measure $\nu$ on $\Omega$, then
$\mathcal{P}_{\mathrm{SUM}}(X,\nu)$ is linear in $\nu$ and
\begin{align}
\mathcal{S}(\nu;\mathcal{P}_{\mathrm{SUM}})
=
\mathbb{E}_X\!\left[
\bar W(\mathcal{P}_{\mathrm{SUM}}(X,\nu))
\right]
\end{align}
is concave in $\nu$. Consequently, maximizing this functional subject
to a convex mass constraint is a convex optimization problem; likewise,
minimizing total mass subject to a superlevel constraint on
$\mathcal{S}$ is convex. Concavity alone does not imply uniqueness of
the optimizer. Because the relaxation enlarges the feasible class, it
provides an upper bound on the discrete optimum for the maximization
problem and a lower bound on the minimum required mass for the
minimum-cardinality analogue. This relaxation is not used in the
numerical experiments. Extending a comparable continuous formulation
to $\mathcal{P}_{\mathrm{MAX}}$ is left for future work.
\end{rem}

The remainder of the paper focuses on the discrete candidate-set formulation used in practice. We next present IA-SPA, which selects transmitter locations sequentially and admits an explicit approximation guarantee for the resulting submodular objective.

\section{Proposed Framework and Algorithm}\label{sec:alg}

For a fixed aggregation measure
$\mathcal{P}\in\{\mathcal{P}_{\mathrm{MAX}},\mathcal{P}_{\mathrm{SUM}}\}$,
we denote the resulting algorithm by $\IASPA{\mathcal{P}}$. For a
current deployed set $T$, the marginal gain of adding a feasible site
$x$~is
\begin{align}
    \mathcal{G}_{\mathcal{P}}(x\mid T)
    =
    \mathcal{S}(T\cup\{x\};\mathcal{P})
    -
    \mathcal{S}(T;\mathcal{P}).
\end{align}

Algorithm~\ref{alg:eps-greedy} is written to cover both clean-slate and
incremental deployment. The set $T_{\mathrm{fixed}}$ contains sites
that are already deployed and is empty in the clean-slate case; the
set $A$ contains the newly selected sites.

\begin{algorithm}[H]
\caption{IA-SPA with aggregation measure $\mathcal{P}$}
\label{alg:eps-greedy}
\begin{algorithmic}
\State \textbf{Input:} feasible set $\mathcal{X}$, fixed set
$T_{\mathrm{fixed}}$, aggregation measure $\mathcal{P}$,
parameter $\epsilon\in[0,1)$, termination rule
\State $A\gets\emptyset$
\While{not $\mathrm{Terminated}(A)$ and
$\mathcal{X}\setminus(T_{\mathrm{fixed}}\cup A)\neq\emptyset$}
    \State $T\gets T_{\mathrm{fixed}}\cup A$
    \State Compute $\mathcal{G}_{\mathcal{P}}(x\mid T)$ for every
    $x\in\mathcal{X}\setminus T$
    \State $\mathcal{X}_{\epsilon}(T)\gets
    \left\{x\in\mathcal{X}\setminus T:
    \mathcal{G}_{\mathcal{P}}(x\mid T)
    \geq
    (1-\epsilon)
    \max\limits_{z\in\mathcal{X}\setminus T}
    \mathcal{G}_{\mathcal{P}}(z\mid T)
    \right\}$
    \State Choose $x^*\in\mathcal{X}_{\epsilon}(T)$ uniformly at random
    \State $A\gets A\cup\{x^*\}$
\EndWhile
\State \Return $T_{\mathrm{fixed}}\cup A$
\end{algorithmic}
\end{algorithm}

For the threshold problem~\eqref{eq:min_t_st_cov}, the clean-slate
termination rule is
$\mathcal{S}(A;\mathcal{P})\geq\beta$; with an existing deployment it
becomes
$\mathcal{S}(T_{\mathrm{fixed}}\cup A;\mathcal{P})\geq\beta$.
For a budget of $\beta$ newly added sites, the termination rule is
$|A|\geq\beta$. Because each iteration adds a previously unselected
candidate, the budget rule terminates after exactly $\beta$ additions,
while the threshold rule terminates after at most $|\mathcal{X}|$
additions. If the loop exhausts all candidates before the threshold is
reached, the target is infeasible for the prescribed candidate set.

For the theoretical guarantee below we take
$T_{\mathrm{fixed}}=\emptyset$. A fixed initial deployment can be
handled by the shifted objective
\begin{align}
F(A)
=
\mathcal{S}(T_{\mathrm{fixed}}\cup A;\mathcal{P})
-
\mathcal{S}(T_{\mathrm{fixed}};\mathcal{P}),
\end{align}
which remains monotone and submodular in $A$. Hence the same
diminishing-returns analysis applies to incremental deployment, with
the guarantee interpreted relative to the gain above the fixed
baseline.



\subsection{Theoretical Guarantee on Network Quality}

\begin{prop}\label{prop:dim_returns}
Fix
$\mathcal{P}\in
\{\mathcal{P}_{\mathrm{MAX}},\mathcal{P}_{\mathrm{SUM}}\}$.
Under the assumptions of Theorem~\ref{thm:concavity}, for
$T_1\subseteq T_2\subseteq\mathcal{X}$ and
$x\in\mathcal{X}\setminus T_2$,
\begin{align}
    \mathcal{G}_{\mathcal{P}}(x\mid T_1)
    \geq
    \mathcal{G}_{\mathcal{P}}(x\mid T_2).
\end{align}
\end{prop}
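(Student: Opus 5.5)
The claim is a direct restatement of the submodularity in Theorem~\ref{thm:concavity}, written in terms of the marginal-gain notation $\mathcal{G}_{\mathcal{P}}$. The plan is to unfold the definition
\[
\mathcal{G}_{\mathcal{P}}(x\mid T)=\mathcal{S}(T\cup\{x\};\mathcal{P})-\mathcal{S}(T;\mathcal{P})
\]
for $T=T_1$ and $T=T_2$. Setting $A=T_1$, $B=T_2$ and $t=x$, the desired inequality becomes exactly~\eqref{eq:submodular_def} for $F=\mathcal{S}(\cdot;\mathcal{P})$. The hypotheses match those of Definition~\ref{def:submodular}: $A\subseteq B\subseteq\mathcal{X}$ and $t\in\mathcal{X}\setminus B$. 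The ground set $\mathcal{X}$ is the finite candidate set, as required there.

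The key steps, in order, are as follows.
\begin{enumerate}
\item Note that the assumptions of Theorem~\ref{thm:concavity} are in force: $\bar W$ is concave and non-decreasing, and $\mathcal{P}$ is one of the two admissible surrogates.
\item Invoke Theorem~\ref{thm:concavity} to obtain that $\mathcal{S}(\cdot;\mathcal{P})$ is submodular on $2^{\mathcal{X}}$.
\item Apply~\eqref{eq:submodular_def} with the identification $A=T_1$, $B=T_2$, $t=x$, and rewrite both sides using $\mathcal{G}_{\mathcal{P}}$.
\end{enumerate}

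If one prefers a self-contained argument, the pointwise route from the proof of Theorem~\ref{thm:concavity} can be repeated. Fix $y$ and set $a=\mathcal{P}(y,T_1)\le b=\mathcal{P}(y,T_2)$. Lemma~\ref{lem:P_submodular} gives the increments $\delta_{T_1}\ge\delta_{T_2}\ge 0$. Concavity and monotonicity of $\bar W$ then give $D(a,\delta_{T_1})\ge D(b,\delta_{T_2})$. Integrating against $f$ via Theorem~\ref{thm:S_formula} yields the inequality between the marginal gains. Integrability is not an issue, since $\bar W\circ\mathcal{P}$ is bounded by $\int_0^M w$.

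There is no genuine obstacle here. The only point needing care is the edge case $T_1=\emptyset$ for $\mathcal{P}_{\mathrm{MAX}}$, which is covered by the convention $\mathcal{P}_{\mathrm{MAX}}(y,\emptyset)=0$ in Lemma~\ref{lem:P_submodular}. One should also check that $x\notin T_2$, so that the added element is genuinely new, and this is given in the statement.
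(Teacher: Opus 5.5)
Your proposal is correct and matches the paper's proof, which likewise notes that the inequality is exactly the submodularity inequality~\eqref{eq:submodular_def} for $\mathcal{S}(\cdot;\mathcal{P})$ from Theorem~\ref{thm:concavity}, rewritten in the $\mathcal{G}_{\mathcal{P}}$ notation with $A=T_1$, $B=T_2$, $t=x$. Your optional pointwise argument just repeats the proof of Theorem~\ref{thm:concavity} and is not needed.
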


\begin{proof}[Proof of Proposition~\ref{prop:dim_returns}]
This is exactly the submodularity inequality for
$\mathcal{S}(\cdot;\mathcal{P})$ established in
Theorem~\ref{thm:concavity}.
\end{proof}

\begin{thm}\label{thm:greedy_bound}
Fix
$\mathcal{P}\in
\{\mathcal{P}_{\mathrm{MAX}},\mathcal{P}_{\mathrm{SUM}}\}$ and
$\epsilon\in[0,1)$. Let
\begin{align}
T_k^*
\in
\argmax_{\substack{T\subseteq\mathcal{X}\\ |T|=k}}
\mathcal{S}(T;\mathcal{P}),
\end{align}
and let $T_n$ be the $n$-site clean-slate deployment generated by
$\IASPA{\mathcal{P}}$, with $0\leq n\leq|\mathcal{X}|$. Then
\begin{align}
\left(
1-e^{-\frac{n(1-\epsilon)}{k}}
\right)
\mathcal{S}(T_k^*;\mathcal{P})
\leq
\mathcal{S}(T_n;\mathcal{P}).
\end{align}
\end{thm}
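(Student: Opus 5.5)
The plan is the standard Nemhauser--Wolsey--Fisher argument, adapted to the $(1-\epsilon)$-approximate greedy rule of Algorithm~\ref{alg:eps-greedy}. Write $F(T)=\mathcal{S}(T;\mathcal{P})$ and $\mathrm{OPT}=F(T_k^*)$. First observe that $F(\emptyset)=0$, since $\mathcal{P}(y,\emptyset)=0$ and $\bar W(0)=0$. By Theorem~\ref{thm:concavity}, $F$ is monotone and submodular. Let $T_0=\emptyset\subseteq T_1\subseteq\cdots\subseteq T_n$ be the greedy iterates, with $T_{i+1}=T_i\cup\{x_{i+1}\}$. Because $x_{i+1}\in\mathcal{X}_\epsilon(T_i)$, the selection rule gives
\begin{align}
F(T_{i+1})-F(T_i)\geq(1-\epsilon)\max_{z\in\mathcal{X}\setminus T_i}\mathcal{G}_{\mathcal{P}}(z\mid T_i).
\end{align}
This holds deterministically for every realization of the random tie-breaking, so the guarantee will hold almost surely.

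\textbf{Key recursion.} Fix $i<n$ and list $T_k^*\setminus T_i=\{o_1,\dots,o_m\}$ with $m\leq k$. Monotonicity gives $\mathrm{OPT}\leq F(T_i\cup T_k^*)$. Telescoping over $o_1,\dots,o_m$ and applying Proposition~\ref{prop:dim_returns} to each term gives
\begin{align}
F(T_i\cup T_k^*)-F(T_i)\leq\sum_{j=1}^m\mathcal{G}_{\mathcal{P}}(o_j\mid T_i)\leq k\max_{z\in\mathcal{X}\setminus T_i}\mathcal{G}_{\mathcal{P}}(z\mid T_i).
\end{align}
If $T_k^*\subseteq T_i$, then $F(T_i)\geq\mathrm{OPT}$ already and nothing remains to prove. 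Combining the two displays yields
\begin{align}
\mathrm{OPT}-F(T_{i+1})\leq\Bigl(1-\frac{1-\epsilon}{k}\Bigr)\bigl(\mathrm{OPT}-F(T_i)\bigr).
\end{align}

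\textbf{Closing.} Iterating the recursion $n$ times from $F(T_0)=0$ gives
\begin{align}
\mathrm{OPT}-F(T_n)\leq\bigl(1-\tfrac{1-\epsilon}{k}\bigr)^n\mathrm{OPT}\leq e^{-n(1-\epsilon)/k}\,\mathrm{OPT},
\end{align}
using $1-u\leq e^{-u}$ and $\mathrm{OPT}\geq0$. Rearranging gives the claim. One caveat in this iteration: the factor $1-\frac{1-\epsilon}{k}$ lies in $[0,1)$ when $k\geq1$, but the step is only valid while $\mathrm{OPT}-F(T_i)\geq0$. If the gap becomes negative at some step, monotonicity keeps it negative afterwards, and the bound holds trivially. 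The cases $k=0$ (where $\mathrm{OPT}=0$) and $n=0$ are trivial.

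\textbf{Main obstacle.} The only delicate points are bookkeeping. First, the maximum in the selection rule is taken over $\mathcal{X}\setminus T_i$, and the elements $o_j$ lie in that set, so the bound on the $o_j$ by that maximum is legitimate. Second, the telescoping step applies diminishing returns from $T_i$ to the supersets $T_i\cup\{o_1,\dots,o_{j-1}\}$; Proposition~\ref{prop:dim_returns} requires $o_j$ to lie outside these supersets, which holds because the $o_j$ are distinct and not in $T_i$. I expect the case analysis on the sign of the gap to be the only place requiring care.
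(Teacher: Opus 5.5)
Your proof is correct and follows the paper's argument: monotonicity and submodularity give $\mathrm{OPT}-F(T_i)\le k\max_{z\in\mathcal{X}\setminus T_i}\mathcal{G}_{\mathcal{P}}(z\mid T_i)$, and the $(1-\epsilon)$-selection rule then turns this into a geometric contraction of the gap. You iterate the factor $1-\frac{1-\epsilon}{k}\in[0,1)$ first and apply $1-u\le e^{-u}$ only to $\mathrm{OPT}\ge0$, which is slightly cleaner than the paper's per-step bound $\left(1-\frac{1-\epsilon}{k}\right)\Delta_n\le e^{-\frac{1-\epsilon}{k}}\Delta_n$; that bound tacitly needs $\Delta_n\ge0$, whereas your iteration is valid for either sign of the gap because the factor is non-negative, so your separate sign caveat is harmless but not needed.
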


\begin{proof}[Proof of Theorem~\ref{thm:greedy_bound}]
Let
$T_k^*=\{y_1^*,\ldots,y_k^*\}$. By monotonicity and then
submodularity,
\begin{align}
\mathcal{S}(T_k^*;\mathcal{P})
&\leq
\mathcal{S}(T_n\cup T_k^*;\mathcal{P})\\
&\leq
\mathcal{S}(T_n;\mathcal{P})
+
\sum_{j=1}^k
\mathcal{G}_{\mathcal{P}}(y_j^*\mid T_n).
\end{align}
If $y_j^*\in T_n$, the corresponding marginal gain is zero; otherwise
it is bounded by the largest available marginal gain. Hence
\begin{align}
\mathcal{S}(T_k^*;\mathcal{P})
-
\mathcal{S}(T_n;\mathcal{P})
\leq
k
\max_{x\in\mathcal{X}\setminus T_n}
\mathcal{G}_{\mathcal{P}}(x\mid T_n).
\end{align}
By the selection rule of $\IASPA{\mathcal{P}}$,
\begin{align}
\mathcal{S}(T_{n+1};\mathcal{P})
-
\mathcal{S}(T_n;\mathcal{P})
=
\mathcal{G}_{\mathcal{P}}(x_{n+1}\mid T_n)
\geq
(1-\epsilon)
\max_{x\in\mathcal{X}\setminus T_n}
\mathcal{G}_{\mathcal{P}}(x\mid T_n).
\end{align}
Therefore, with
\begin{align}
\Delta_n
=
\mathcal{S}(T_k^*;\mathcal{P})
-
\mathcal{S}(T_n;\mathcal{P}),
\end{align}
we have
\begin{align}
\Delta_{n+1}
\leq
\left(
1-\frac{1-\epsilon}{k}
\right)
\Delta_n
\leq
e^{-\frac{1-\epsilon}{k}}\Delta_n.
\end{align}
Since
$\mathcal{S}(\emptyset;\mathcal{P})=0$,
iteration yields
\begin{align}
\Delta_n
\leq
e^{-\frac{n(1-\epsilon)}{k}}
\mathcal{S}(T_k^*;\mathcal{P}),
\end{align}
which is equivalent to the stated bound.
\end{proof}

For $n=k$ and $\epsilon=0$, Theorem~\ref{thm:greedy_bound} gives the
classical $1-1/e\approx63.2\%$ approximation guarantee for the
surrogate objective $\mathcal{S}(\cdot;\mathcal{P})$. This is a worst-case surrogate bound, not a quantitative prediction for the SINR metrics below.

\subsection{Computational Complexity}\label{sec:complexity}

Let $\mathcal{X}$ be the set of candidate transmitter locations and
$\Omega_{\delta}$ the discretized receiver grid. We assume that the
propagation fields $P(y,x)$ have been pre-computed for all
$x \in \mathcal{X}$ and $y \in \Omega_{\delta}$. By
Theorem~\ref{thm:S_formula}, the discretized functional
$\mathcal{S}(T;\mathcal{P})$ is evaluated by applying $\bar{W}$ pointwise to the
aggregated field $\mathcal{P}(y,T)$ and averaging over
$\Omega_{\delta}$; no discretization of the auxiliary variable
$\kappa$ is required.

For both aggregation measures considered here, the current aggregated
field can be maintained throughout the greedy iteration. If
$Q_T(y) = \mathcal{P}(y,T)$, then adding a candidate $x$ gives
\begin{align}
    Q_{T \cup \{x\}}(y)
    &= Q_T(y) + P(y,x),
    && \text{for } \mathcal{P}_{\mathrm{SUM}}, \\
    Q_{T \cup \{x\}}(y)
    &= \max\{Q_T(y),P(y,x)\},
    && \text{for } \mathcal{P}_{\mathrm{MAX}}.
\end{align}
Hence the marginal gain of one candidate can be evaluated in
$O(|\Omega_{\delta}|)$ operations. Evaluating all
$|\mathcal{X}|$ candidates therefore costs
$O(|\mathcal{X}|\,|\Omega_{\delta}|)$ per greedy iteration, and
selecting $n$ transmitters can be implemented in
\begin{align}
    O\!\left(n |\mathcal{X}|\,|\Omega_{\delta}|\right)
\end{align}
operations once the propagation maps have been computed. Storing all
pre-computed maps requires
$O(|\mathcal{X}|\,|\Omega_{\delta}|)$ memory.

\subsection{Empirical Runtime and Scalability}\label{sec:runtime}

The computational pipeline consists of two stages: ray-traced
radio-map generation and greedy site selection. For each candidate
$x \in \mathcal{X}$, the radio map $P(\cdot,x)$ is generated once,
independently of the other candidates, and is subsequently reused
throughout the optimization. 
Consequently, radio-map generation is embarrassingly parallel over the
candidate sites $x \in \mathcal{X}$. For a fixed receiver grid and
ray-tracing configuration, the total work of this stage grows linearly
with $|\mathcal{X}|$, since one independent propagation simulation is
performed for each candidate. Its wall-clock time can therefore be
reduced substantially by distributing these independent simulations
across available parallel hardware. In particular, ray tracing need
not be repeated for different multi-transmitter configurations.

\begin{table}[t]
    \centering
    \renewcommand{\arraystretch}{0.6}
    \caption{Wall-clock runtime for radio-map generation (GPU) and
    greedy site selection (CPU).}
    \label{tab:runtime}
    \footnotesize
    \begin{tabular}{l r r r r}
        \toprule
        \textbf{Scene} & $|\mathcal{X}|$ &
        \textbf{Map / cand.} & \textbf{All maps} &
        \textbf{Greedy / iter.} \\
        \midrule
        San Francisco & 11207 & 0.07 s & 0.22 h & 69.2 s \\
        Florence      & 10836 & 0.11 s & 0.33 h & 45.4 s \\
        \bottomrule
    \end{tabular}
\end{table}

Table~\ref{tab:runtime} reports wall-clock timings on a single NVIDIA
Tesla V100-PCIE-16GB for radio-map generation and a single core of an
Intel Xeon Gold 6248R @ 3.00\,GHz for the CPU-only greedy optimizer.
Generating the approximately $11{,}000$ candidate radio maps requires
$0.22$ GPU-hours for San Francisco and $0.33$ GPU-hours for Florence.
Once these maps have been generated, each greedy iteration requires
$45$--$70$ seconds, so placing 33 transmitters from scratch takes less
than 40 minutes of CPU time. Including radio-map generation, the
complete pipeline finishes in under one hour for either city on the
hardware used here. These measurements demonstrate that the method is
computationally feasible at the city-scale candidate-set sizes
considered in our experiments.

\section{Numerical Results}\label{sec:num_results}
We evaluate IA-SPA using site-specific ray-tracing simulations in
San Francisco and Florence, and compare the resulting deployments with
reference operator configurations and several constrained and
incremental deployment scenarios.

\subsection{Simulation Setup}\label{sec:greedy_num_res_setup}

We conduct numerical experiments using the San Francisco and Florence
3D urban scenes provided in the \texttt{Sionna} ray-tracing framework
(Fig.~\ref{fig:sf_map}). The scenes contain site-specific building
geometry and material properties used to generate the propagation
fields $P(y,x)$. Unless stated otherwise, the carrier frequency is
$1.8$ GHz, the transmit power is $40$ dBm ($10$ W), and the bandwidth
is $B=10$ MHz.

The San Francisco scene covers approximately $1.1\times1.1$ km, while
the Florence scene covers approximately $1.0\times1.1$ km. Candidate
transmitter sites $\mathcal{X}$ are generated on a uniform
$10\,\mathrm{m}\times10\,\mathrm{m}$ horizontal grid at $20$ m above
the local terrain and filtered using the scene geometry to remove
locations inside buildings. This yields
$|\mathcal{X}|=11{,}207$ feasible sites in San Francisco and
$|\mathcal{X}|=10{,}836$ in Florence. Receiver locations are sampled
over the same horizontal domains at heights $1.5$, $5$, $10$, and
$20$ m. Performance statistics are computed separately at each
receiver height and, unless stated otherwise, the four resulting
values are averaged.

Reference operator deployments are constructed from reported
transmitter coordinates obtained from the City and County of San
Francisco open-data portal~\cite{sfdata2025} and the OpenCellID
database~\cite{OpenCellID2026}. The comparisons below use the AT\&T and
T-Mobile deployments in San Francisco and the Iliad deployment in
Florence.

\begin{figure}[h!]
    \centering
    \includegraphics[width=0.49\linewidth]{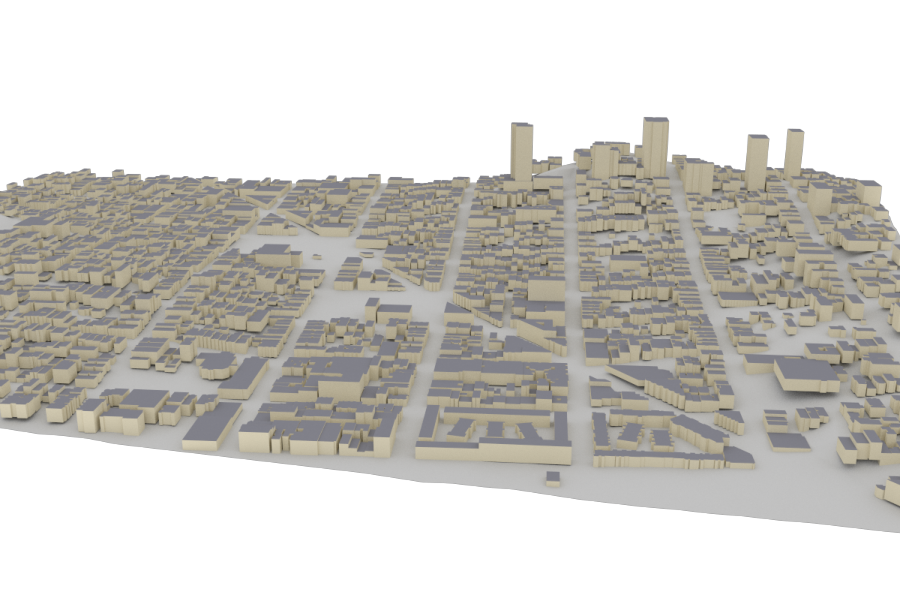}
    \includegraphics[width=0.49\linewidth]{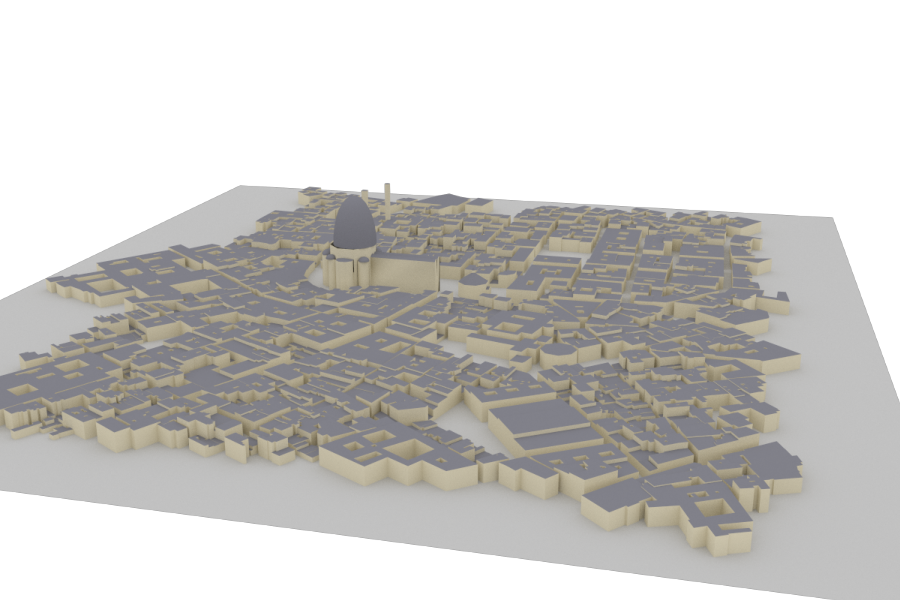}
    \caption{3D maps of San Francisco (left) and Florence (right)
    included in Sionna.}
    \label{fig:sf_map}
\end{figure}

\subsection{Performance Metrics}\label{sec:performance_metrics}

Although the placement is computed using either
$\IASPA{\mathcal{P}_{\mathrm{MAX}}}$ or
$\IASPA{\mathcal{P}_{\mathrm{SUM}}}$, every
resulting deployment is evaluated using the
physical SINR defined in Section~\ref{sec:systemmodel}. In terms of the
two aggregated power measures,
\begin{equation}
    \mathrm{SINR}(y,T)
    =
    \frac{\mathcal{P}_{\mathrm{MAX}}(y,T)}
    {\mathcal{P}_{\mathrm{SUM}}(y,T)
     -\mathcal{P}_{\mathrm{MAX}}(y,T)+\sigma^2}.
\end{equation}
The corresponding achievable data rate is
\begin{equation}
    C(y)
    =
    B\log_2\left(
    1+\frac{\mathrm{SINR}(y,T)}{\Gamma}
    \right),
\end{equation}
where $B=10$ MHz and $\Gamma=2$ (3 dB) is the implementation gap.

At each receiver height, we report the mean, standard deviation, and
maximum of the achieved data rate. We also report the \emph{edge rate},
defined as the 5th percentile of $C(y)$ over the covered receiver
locations at that height,
\begin{equation}
    \{y\in\Omega_\delta : C(y)>0\}.
\end{equation}
Locations with zero achievable rate, including locations inside
buildings or otherwise unreachable from the deployment, are therefore
not counted as zero-rate samples in the edge-rate statistic. Each of
these four rate statistics is computed separately at heights
$1.5$, $5$, $10$, and $20$ m, and the reported value is the arithmetic
mean of the four height-specific values.

We also report the interference power
\begin{equation}
    I(y,T)
    =
    \mathcal{P}_{\mathrm{SUM}}(y,T)
    -\mathcal{P}_{\mathrm{MAX}}(y,T),
\end{equation}
which is the total received power from all transmitters other than the
strongest serving transmitter under the association model of
Section~\ref{sec:systemmodel}. The mean, standard deviation, and
maximum interference are likewise computed separately at each receiver
height and then averaged over the four heights.

\subsection{San Francisco with $\IASPA{\mathcal{P}_\text{MAX}}$}

We first evaluate
$\IASPA{\mathcal{P}_\text{MAX}}$ and compare the
resulting deployments with the
AT\&T and T-Mobile reference configurations in San Francisco.
Figures~\ref{fig:greedy_res_max} and~\ref{fig:greed_interf_max}
illustrate the achieved data-rate and interference fields,
respectively, for the AT\&T comparison.

\begin{figure}[H]
    \centering
    \includegraphics[width=\linewidth]{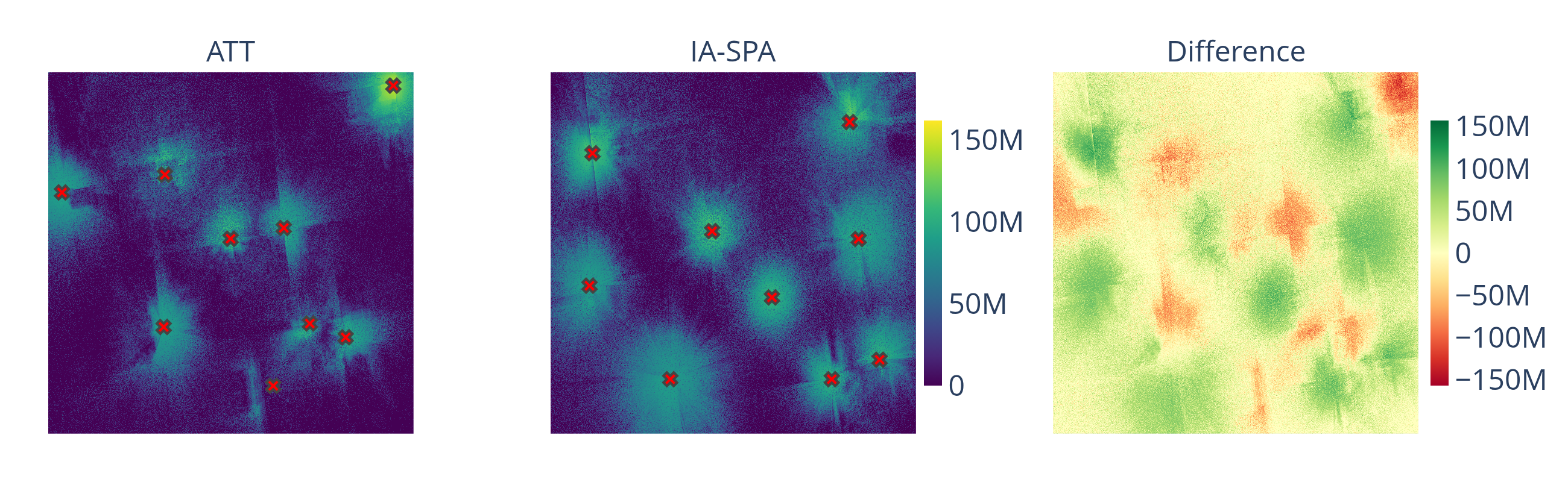}
    \caption{Achieved data rate across San Francisco for the AT\&T
    reference deployment and the deployment produced by
    $\IASPA{\mathcal{P}_\text{MAX}}$.}
    \label{fig:greedy_res_max}
\end{figure}

\begin{figure}[H]
    \centering
    \includegraphics[width=\linewidth]{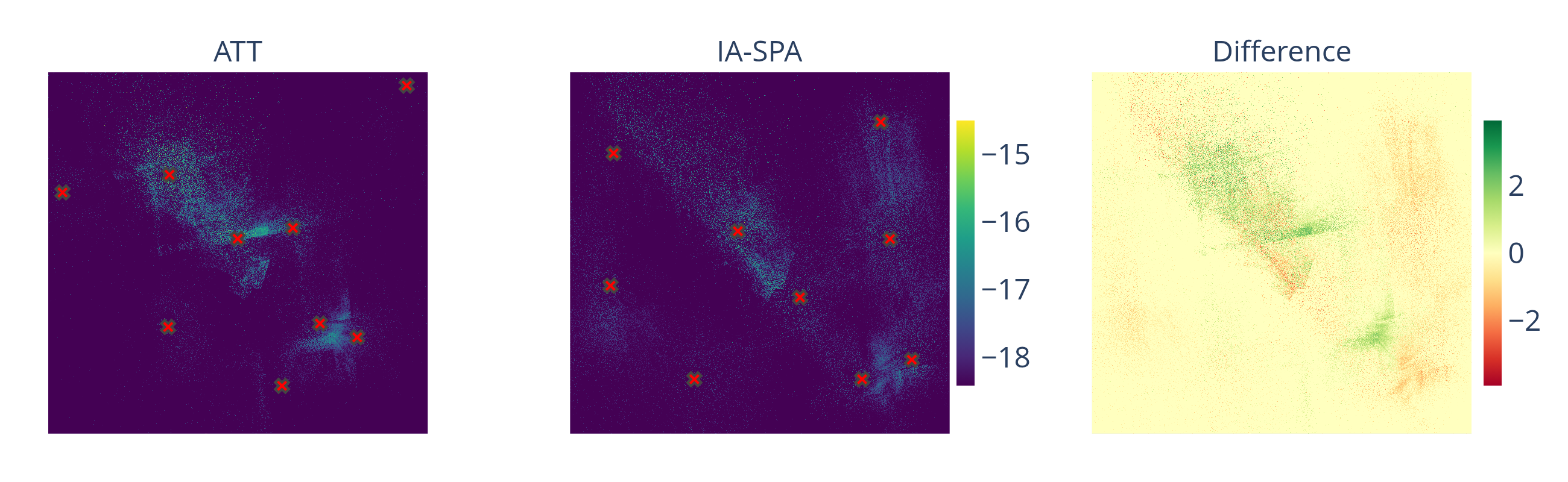}
    \caption{Interference across San Francisco for the AT\&T reference
    deployment and the deployment produced by
    $\IASPA{\mathcal{P}_\text{MAX}}$.}
    \label{fig:greed_interf_max}
\end{figure}

Table~\ref{tab:comparison_max} summarizes the quantitative results.
For the AT\&T reference, $\IASPA{\mathcal{P}_\text{MAX}}$ increases the mean data rate by
$70.4\%$ and the edge rate by $88.0\%$. For the T-Mobile reference,
the corresponding improvements are $215.5\%$ and $758.4\%$.
At the same time, mean interference decreases by $2.0\%$ in the
AT\&T comparison and by $93.7\%$ in the T-Mobile comparison. Thus,
the throughput gains are not obtained simply by increasing useful
received power at the expense of additional interference.

\begin{table*}[t]
    \centering
    \renewcommand{\arraystretch}{1.2}
    \caption{Performance comparison in San Francisco for
    $\IASPA{\mathcal{P}_\text{MAX}}$.}
    \label{tab:comparison_max}
    \small
    \begin{tabular}{l r r r r r r}
        \toprule
        & \multicolumn{3}{c}{\textbf{AT\&T Scenario}}
        & \multicolumn{3}{c}{\textbf{T-Mobile Scenario}} \\
        \cmidrule(lr){2-4} \cmidrule(lr){5-7}
        \textbf{Data Rate [Mbps]}
        & \textbf{Reference} & \textbf{IA-SPA} & \textbf{Change}
        & \textbf{Reference} & \textbf{IA-SPA} & \textbf{Change} \\
        \midrule
        Mean Rate
        & 21.18 & 36.10 & +70.4\%
        & 10.06 & 31.72 & +215.49\% \\
        Std.\ Deviation
        & 26.21 & 28.84 & +10.01\%
        & 20.24 & 29.75 & +47.01\% \\
        Max Rate
        & 146.35 & 116.50 & -20.4\%
        & 154.24 & 125.72 & -18.49\% \\
        Edge Rate (5\%-tile)
        & 5.51 & 10.36 & +88.0\%
        & 1.11 & 9.55 & +758.39\% \\
        \midrule
        \textbf{Interference [nW]}
        & \textbf{Reference} & \textbf{IA-SPA} & \textbf{Change}
        & \textbf{Reference} & \textbf{IA-SPA} & \textbf{Change} \\
        \midrule
        Mean Interference
        & 2.24 & 2.19 & -2.04\%
        & 17.54 & 1.11 & -93.70\% \\
        Std.\ Deviation
        & 11.53 & 8.58 & -25.57\%
        & 216.76 & 5.22 & -97.59\% \\
        Max Interference
        & 495.86 & 473.93 & -4.42\%
        & 20491.9 & 214.04 & -98.96\% \\
        \bottomrule
    \end{tabular}
\end{table*}

The maximum data rate is lower for $\IASPA{\mathcal{P}_\text{MAX}}$ in both comparisons, despite
the substantial improvements in mean and edge rates. This is not
inconsistent with the network-wide gains: the maximum rate is
determined by a single receiver location and can therefore be dominated
by an isolated high-SINR region. The mean and edge rates provide more
representative measures of performance across the receiver domain.

For comparison, we also optimized the San Francisco deployments using \(\mathcal P_{\rm SUM}\). For the AT\&T reference, the resulting mean- and edge-rate improvements are \(43.8\%\) and \(37.1\%\), respectively, compared with \(70.4\%\) and \(88.0\%\) using \(\mathcal P_{\rm MAX}\). The same ordering is observed for the T-Mobile comparison. These results indicate that, among the two surrogates considered here, \(\mathcal P_{\rm MAX}\) is better aligned empirically with the final SINR-based performance metric; we therefore use \(\mathcal P_{\rm MAX}\) in the remaining experiments.

\subsection{Florence with Exclusion Zones}\label{sec:exclusion}

We next examine the ability of
$\IASPA{\mathcal{P}_{\mathrm{MAX}}}$ to
accommodate spatial constraints on transmitter placement in the
Florence scene. An exclusion
zone is imposed by removing from the feasible candidate set
$\mathcal{X}$ all candidate sites whose locations lie within the
restricted region. The propagation fields $P(y,x)$ themselves are
left unchanged. Such constraints can represent, for example,
restrictions arising from real-estate availability, permitting, or
heritage-preservation requirements in historic urban areas.

We consider two levels of restriction. The first experiment imposes a
single elliptical exclusion zone around the Basilica di San Lorenzo
and its surrounding piazza (representing the type of siting restriction
that can arise around protected historic sites in Italy), in which new
transmitter placement is prohibited. The second imposes three
non-contiguous exclusion zones in different parts of Florence,
providing a more restrictive test of the placement algorithm. The three
zones are shown in Fig.~\ref{fig:exp1_zones}.

\subsubsection{Single Exclusion Zone}\label{sec:singlezone}

Table~\ref{tab:fl_metrics} compares the deployment produced by $\IASPA{\mathcal{P}_{\mathrm{MAX}}}$ with the
Iliad reference deployment under the single exclusion-zone constraint.
$\IASPA{\mathcal{P}_{\mathrm{MAX}}}$ increases the mean data rate by approximately $30\%$ and the
edge rate by approximately $56\%$, while reducing mean interference by
approximately $30\%$. Thus, substantial improvements over the reference
deployment remain possible even after candidate sites in the protected
region are removed from the feasible set.

\begin{table}[ht!]
\centering
\renewcommand{\arraystretch}{1.2}
\caption{Performance comparison in Florence with a single exclusion
zone using $\IASPA{\mathcal{P}_{\mathrm{MAX}}}$.}
\label{tab:fl_metrics}
\begin{tabular}{@{}lrrr@{}}
\toprule
\textbf{Metric}
& \textbf{Iliad}
& \textbf{IA-SPA}
& \textbf{Change} \\
\midrule
\textit{Data Rate [Mbps]} & & & \\
Mean Data Rate
& 39.21 & 50.96 & +29.98\% \\
Standard Deviation
& 29.38 & 29.28 & -0.33\% \\
Maximum Data Rate
& 204.79 & 207.76 & +1.45\% \\
Edge Rate (5\%-tile)
& 7.41 & 11.53 & +55.62\% \\
\midrule
\textit{Interference [nW]} & & & \\
Mean Interference
& 85.44 & 59.40 & -30.48\% \\
Standard Deviation
& 182.05 & 69.89 & -61.61\% \\
Maximum Interference
& 4998.62 & 2490.50 & -50.18\% \\
\bottomrule
\end{tabular}
\end{table}

\subsubsection{Multiple Non-Contiguous Exclusion Zones}
\label{sec:multizone}

We next impose three non-contiguous elliptical exclusion zones: the
original Basilica di San Lorenzo zone, a second zone to the northwest,
and a third zone to the south, as shown in
Fig.~\ref{fig:exp1_zones}. Candidate sites lying inside any of the
three regions are removed from the feasible set before
$\IASPA{\mathcal{P}_{\mathrm{MAX}}}$ is applied.

For consistency, we also check whether any transmitter in the Iliad
reference deployment lies inside the imposed regions. None of the
reference sites is removed by the two additional zones, so the
reference deployment remains the same 33-transmitter configuration used in
the single-zone experiment. $\IASPA{\mathcal{P}_{\mathrm{MAX}}}$ is therefore also evaluated with
33 transmitters, but over the more restricted candidate set.

\begin{figure}[t]
    \centering
    \includegraphics[width=0.65\linewidth]{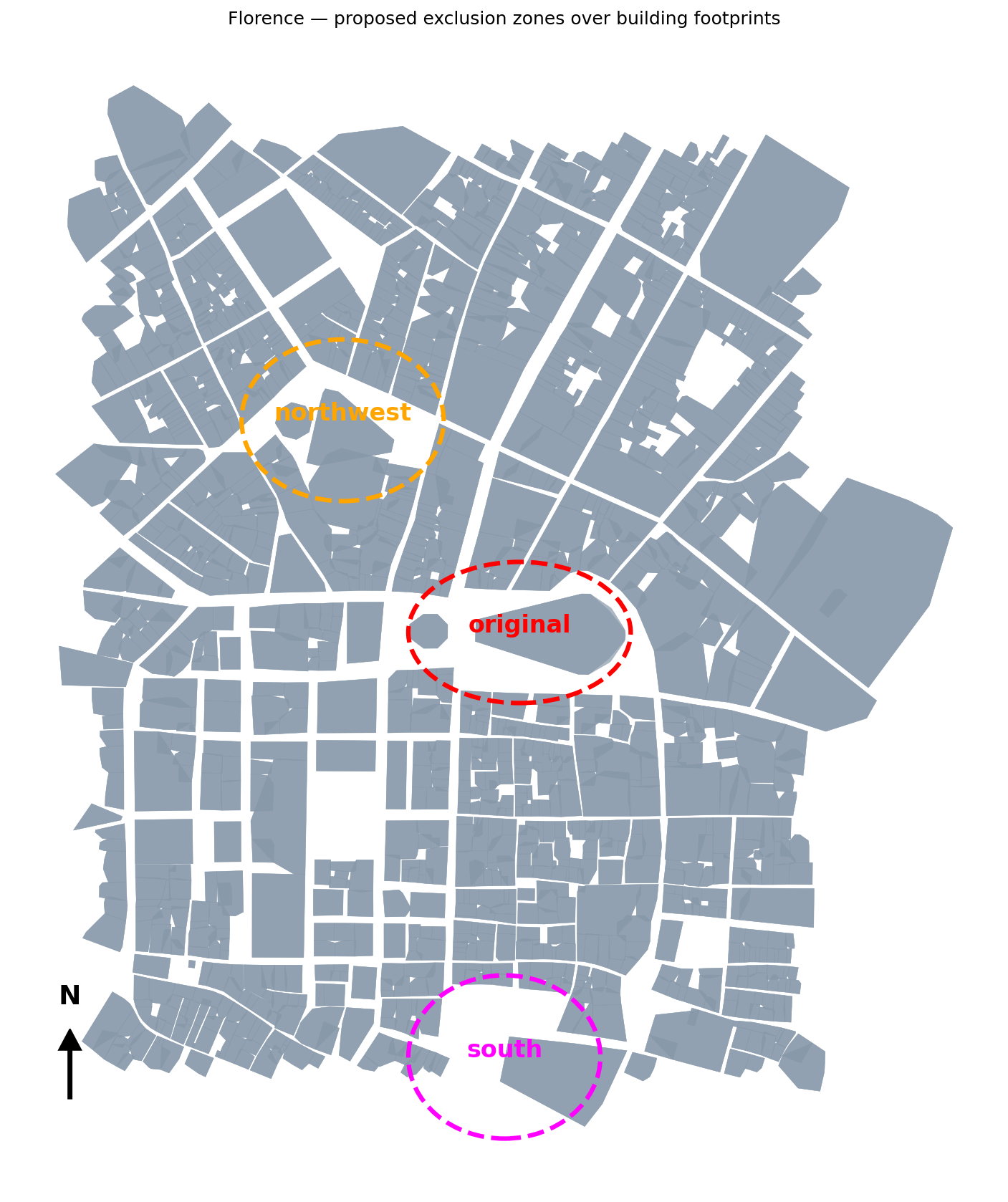}
    \caption{Florence building footprints and the three non-contiguous
    exclusion zones used in Section~\ref{sec:multizone}: the Basilica
    di San Lorenzo zone (red), together with additional zones to the
    northwest (orange) and south (magenta).}
    \label{fig:exp1_zones}
\end{figure}

\begin{table}[t]
\centering
\renewcommand{\arraystretch}{1.2}
\caption{Performance comparison in Florence with three non-contiguous
exclusion zones using $\IASPA{\mathcal{P}_{\mathrm{MAX}}}$
(33 transmitters in each deployment).}
\label{tab:fl_multi3}
\begin{tabular}{@{}lrrr@{}}
\toprule
\textbf{Metric}
& \textbf{Iliad}
& \textbf{IA-SPA}
& \textbf{Change} \\
\midrule
\textit{Data Rate [Mbps]} & & & \\
Mean Data Rate
& 39.15 & 49.77 & +27.13\% \\
Standard Deviation
& 29.23 & 29.00 & -0.80\% \\
Maximum Data Rate
& 204.78 & 207.76 & +1.45\% \\
Edge Rate (5\%-tile)
& 7.43 & 11.21 & +50.88\% \\
\midrule
\textit{Interference [nW]} & & & \\
Mean Interference
& 88.10 & 63.12 & -28.32\% \\
Standard Deviation
& 182.38 & 71.59 & -60.75\% \\
Maximum Interference
& 5064.36 & 1858.59 & -63.30\% \\
\bottomrule
\end{tabular}
\end{table}

\begin{figure}[t]
    \centering
    \includegraphics[width=\linewidth]{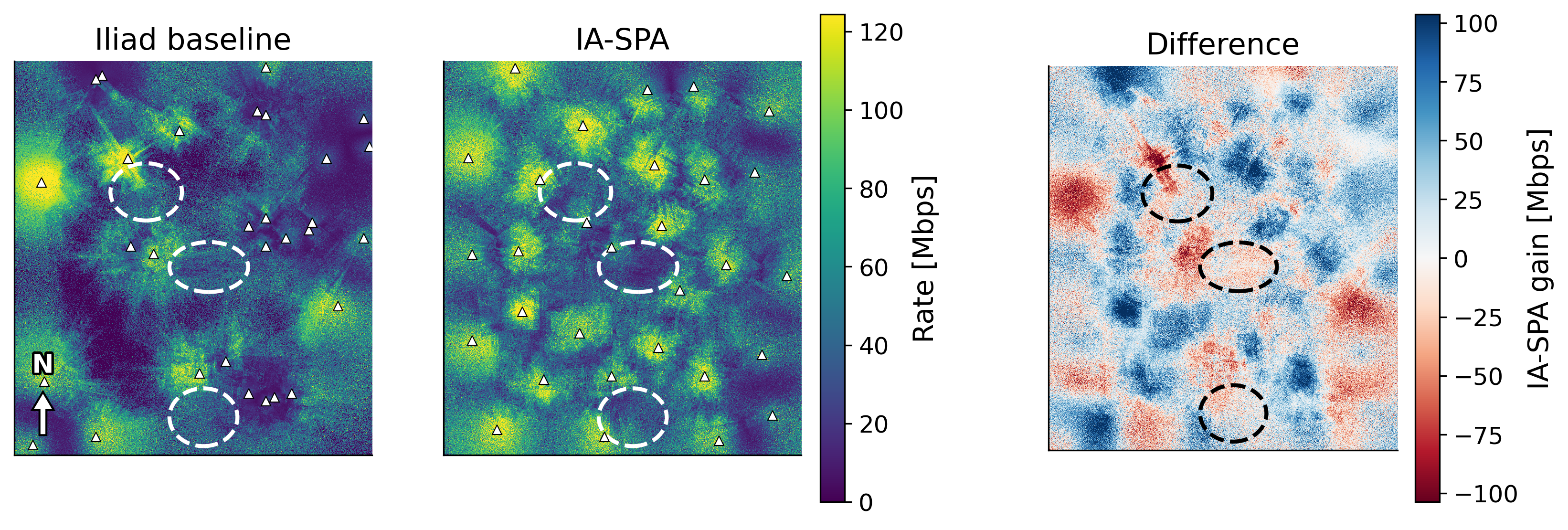}
    \caption{Data rate across Florence with three non-contiguous
    exclusion zones (dashed white ellipses) using
    $\IASPA{\mathcal{P}_{\mathrm{MAX}}}$:
    Iliad reference deployment (left), optimized deployment (center),
    and absolute gain (right). Triangles
    mark transmitter locations.}
    \label{fig:exp1_rate_multi3}
\end{figure}

\begin{figure}[t]
    \centering
    \includegraphics[width=\linewidth]{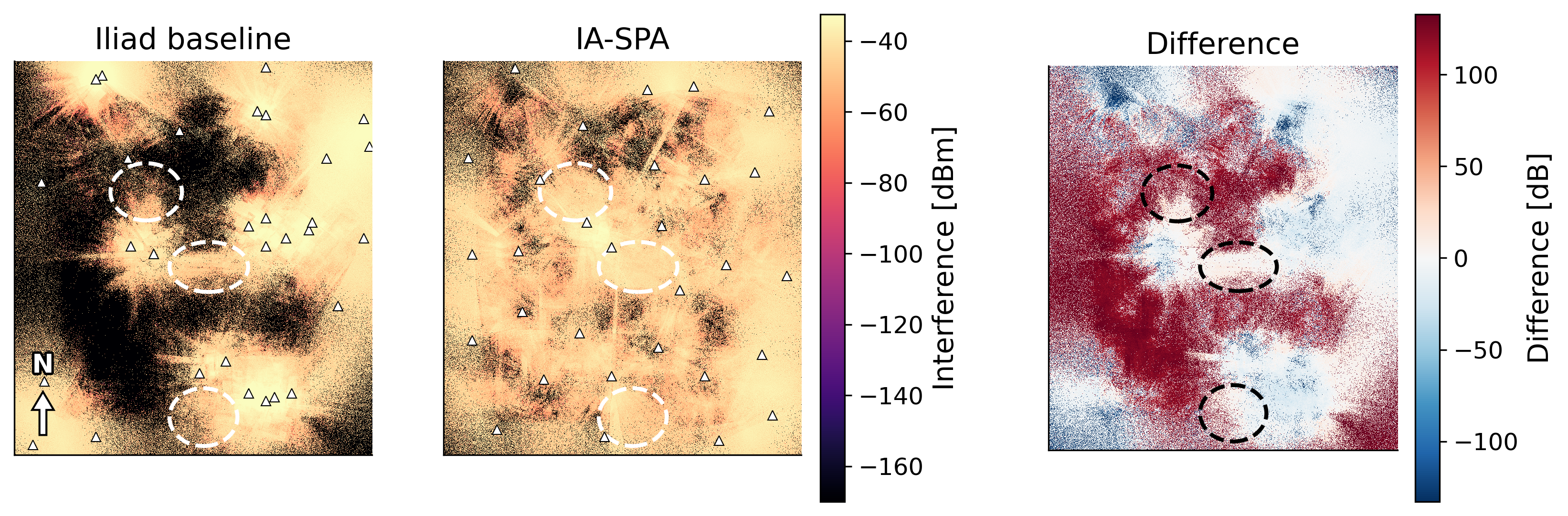}
    \caption{Interference across Florence with three non-contiguous
    exclusion zones (dashed white ellipses) using
    $\IASPA{\mathcal{P}_{\mathrm{MAX}}}$:
    Iliad reference deployment (left), optimized deployment (center),
    and absolute difference (right).}
    \label{fig:exp1_interf_multi3}
\end{figure}

As shown in Table~\ref{tab:fl_multi3}, $\IASPA{\mathcal{P}_{\mathrm{MAX}}}$ retains a substantial
advantage under the more restricted candidate set, increasing the mean
data rate by $27.1\%$ and the edge rate by $50.9\%$, while reducing
mean interference by $28.3\%$. Compared with the single-zone results
in Table~\ref{tab:fl_metrics}, the additional two exclusion zones
produce only a modest reduction in the $\IASPA{\mathcal{P}_{\mathrm{MAX}}}$ rate advantage: the
mean-rate improvement decreases from approximately $30\%$ to $27\%$,
and the edge-rate improvement from approximately $56\%$ to $51\%$.
For the particular exclusion regions considered here, the placement
advantage therefore remains substantial even when several separated
parts of the candidate space are unavailable. More generally, this
experiment illustrates that geographic siting restrictions arising
from permitting, real-estate, or heritage-preservation constraints can
be incorporated directly by modifying the feasible candidate set,
without changing the IA-SPA optimization framework.

\subsection{Robustness to Small-Scale Fading}\label{sec:fading}

The transmitter configurations considered above are optimized using the
deterministic site-specific propagation fields $P(y,x)$ generated by
Sionna RT. These fields account for the propagation effects resolved by
the ray tracer, but do not include additional random small-scale
fading. To test whether the performance advantage of
$\IASPA{\mathcal{P}_{\mathrm{MAX}}}$ persists
in the presence of such fading, we keep the transmitter locations
fixed and superimpose random fading on the received powers. We use the
San Francisco AT\&T scenario as the test case and compare the fixed
$\IASPA{\mathcal{P}_{\mathrm{MAX}}}$ and
AT\&T deployments.

We consider two models for the additional fading coefficient: Rayleigh
fading and Rician fading with $K=10$ dB. 
We also consider \(L\in\{1,2,4\}\) independent fading branches to model an effective diversity order. As specified below, the branch SINRs are averaged before evaluating the rate; thus, \(L\) should be interpreted as a generic diversity-order proxy rather than as a particular receive-antenna combining architecture.
For each fading model and each
value of $L$, we generate $1000$ independent realizations and recompute
the SINR, mean data rate, and edge rate for both deployments.

Each transmitter-receiver link is faded independently
over the $L$ branches. For branch $\ell \in \{1,\dots,L\}$, transmitter
$x \in T$, and receiver $y$, the deterministic field is scaled by an
independent, unit-power Rician gain,
\begin{equation}
P_{\mathrm{fad},\ell}(y,x) = |h_\ell(y,x)|^2\, P(y,x), \qquad
h_\ell(y,x) \sim \sqrt{\tfrac{K}{K+1}} + \mathcal{CN}\!\left(0,\tfrac{1}{K+1}\right),
\end{equation}
so that $\mathbb{E}\big[|h_\ell(y,x)|^2\big] = 1$ and
$\mathbb{E}\big[P_{\mathrm{fad},\ell}(y,x)\big] = P(y,x)$ for every branch
(Rayleigh fading is the $K=0$ case). This confirms the $L=1$ expression above.
Every transmitter in $T$, serving and interfering alike, fades independently
per branch, while the serving transmitter itself is fixed by the
\emph{unfaded} field, $t^\star(y) = \arg\max_{x \in T} P(y,x)$, i.e.\
association is on long-term average power rather than on the instantaneous
fade. For $L>1$ the branch powers are \emph{not} summed and renormalized;
instead each branch yields its own instantaneous SINR,
\begin{equation}
\mathrm{SINR}_\ell(y) = \frac{P_{\mathrm{fad},\ell}(y,t^\star)}
{\sum_{x \in T} P_{\mathrm{fad},\ell}(y,x) - P_{\mathrm{fad},\ell}(y,t^\star) + N_0},
\end{equation}
where $N_0$ is the receiver thermal noise power over bandwidth $B$ (in
absolute units here, rather than the noise-normalized $\sigma^2=1$ convention
of Sec.~II-A), and the $L$ branch SINRs are averaged before the rate is
evaluated,
\begin{equation}
C(y) = B \log_2\!\left(1 + \frac{1}{L}\sum_{\ell=1}^{L} \mathrm{SINR}_\ell(y) \,/\, \Gamma\right).
\end{equation}
This SINR averaging is the sense in which the \(L\) branches are combined; \(L\) should therefore be interpreted as a generic diversity-order proxy rather than as a specific receive-combining architecture or multicarrier rate model.

\begin{figure}[t]
    \centering
    \includegraphics[width=\linewidth]{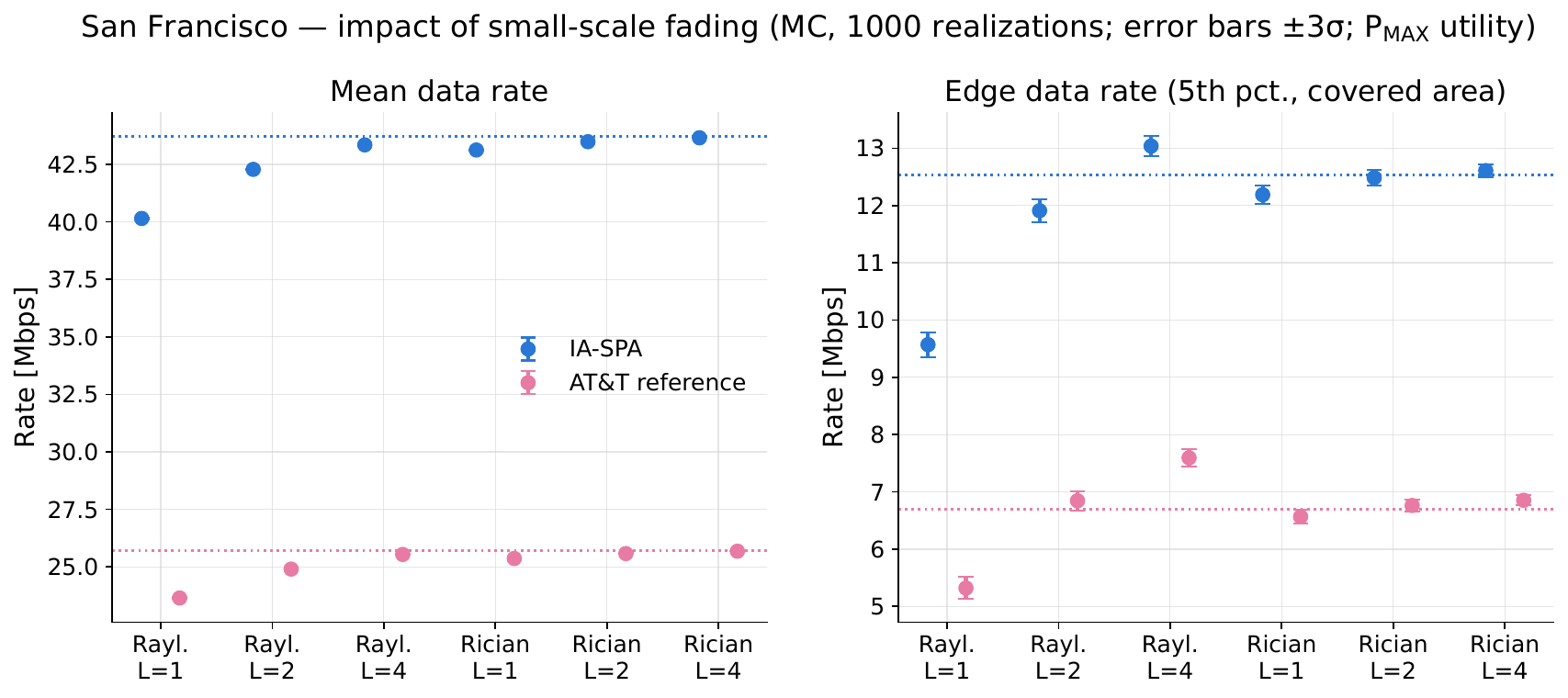}
    \caption{Impact of small-scale fading on the mean (left) and edge
    (right) data rates for the San Francisco AT\&T scenario using
    $\mathcal{P}_{\mathrm{MAX}}$. Dotted lines show the corresponding
    deterministic (no-fading) values from the same pipeline run.
    Markers show the Monte Carlo mean over $1000$ realizations, with
    error bars at $\pm3\sigma$.}
    \label{fig:exp2_fading}
\end{figure}

Figure~\ref{fig:exp2_fading} shows that fading has its largest effect
for Rayleigh fading with $L=1$. Relative to the deterministic baseline,
the $\IASPA{\mathcal{P}_{\mathrm{MAX}}}$ mean rate decreases by about $8\%$ (from $43.7$ to
$40.2$ Mbps), while the edge rate decreases by about $23\%$ (from
$12.5$ to $9.6$ Mbps). Increasing the diversity order reduces this
degradation, as does the presence of a dominant component in the Rician
model.

More importantly, the advantage of $\IASPA{\mathcal{P}_{\mathrm{MAX}}}$ over the AT\&T reference is
preserved in every fading configuration tested. Even for Rayleigh
fading with $L=1$, the $\IASPA{\mathcal{P}_{\mathrm{MAX}}}$ edge rate of $9.6$ Mbps remains about
$81\%$ higher than the AT\&T reference value of $5.3$ Mbps, compared
with an $88\%$ advantage in the deterministic case. Thus, within the
fading and diversity models considered here, the performance advantage
of the deployment obtained from the deterministic propagation model
persists after small-scale fading is introduced.

\subsection{Robustness to Model Uncertainty}\label{sec:sensitivity}

The propagation field $P(y,x)$ depends on the material properties and
geometry of the environment, both of which are subject to modeling
uncertainty and real-world variation. We therefore test whether the
performance advantage of a deployment optimized for the nominal
environment persists when the propagation model is perturbed. Using
the San Francisco AT\&T scenario, we keep both the
$\IASPA{\mathcal{P}_{\mathrm{MAX}}}$ and AT\&T
transmitter locations fixed at their nominal values and recompute their
performance under the following perturbed environments:
\begin{itemize}
    \item \textbf{Material perturbations}: the conductivity and relative permittivity of all surfaces are each scaled in turn by 
    \(\{-25\%,-10\%,+10\%,+25\%\}\), yielding eight configurations;
    
    \item \textbf{Dynamic obstacles}: $300$ street-level ``vehicles,''
    modeled as small boxes placed outside building footprints, are
    added at random using three independent random seeds;
    
    \item \textbf{Geometry changes}: one central building
    ($1{,}959$ m$^2$ footprint and $26$ m height) is removed, and a
    $30\times20\times24$ m building is added at a different location.
\end{itemize}

Only the environment is perturbed in these experiments; the
transmitter locations are not re-optimized. The comparison therefore
measures the sensitivity of the performance advantage obtained from
the nominal propagation model to mismatch between the modeled and
perturbed environments.

\begin{figure}[t]
    \centering
    \includegraphics[width=\linewidth]
    {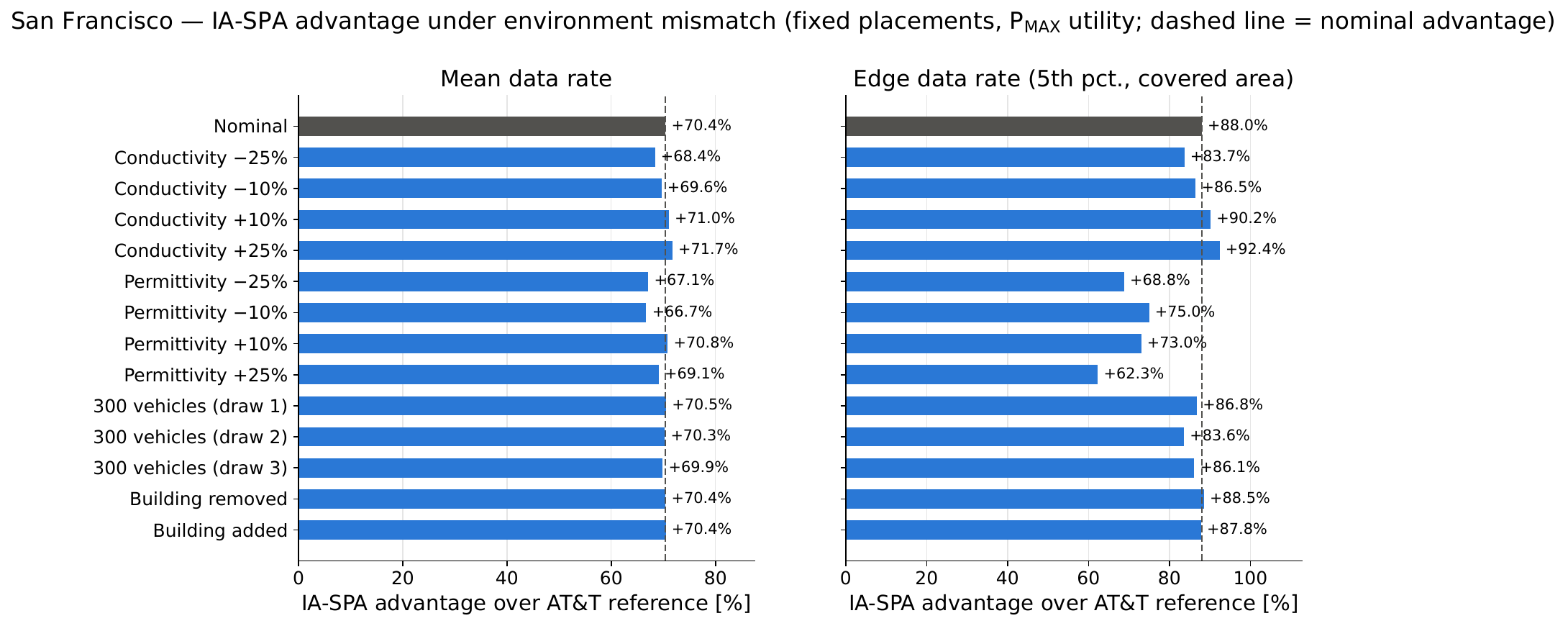}
    \caption{Percentage advantage of $\IASPA{\mathcal{P}_{\mathrm{MAX}}}$ over the fixed AT\&T
    reference in mean data rate (left) and edge rate (right) under
    perturbations of the propagation environment. The dashed line
    marks the nominal, unperturbed advantage.}
    \label{fig:exp3_advantage}
\end{figure}

Figure~\ref{fig:exp3_advantage} summarizes the results. For the nominal
environment, $\IASPA{\mathcal{P}_{\mathrm{MAX}}}$ achieves a $70.4\%$ advantage in mean rate and an
$88.0\%$ advantage in edge rate over the AT\&T reference, consistent
with Table~\ref{tab:comparison_max}. Across the thirteen perturbed
environments---eight material perturbations, three vehicle-clutter
realizations, and two geometry changes---the mean-rate advantage
remains between $66.7\%$ and $71.7\%$, while the edge-rate advantage
remains between $62.3\%$ and $92.4\%$. The smallest observed edge-rate
advantage, obtained when the relative permittivity is increased by
$25\%$, is still $62.3\%$.

In several cases, including increased conductivity and added vehicle
clutter, the relative advantage is essentially unchanged or slightly
larger than in the nominal environment. Taken together, these results
indicate that the performance advantage obtained using the nominal
propagation model is not sensitive to the material, clutter, and
geometric uncertainties considered here.

\subsection{Incremental Deployment onto Existing Networks}
\label{sec:incremental}

In practice, new transmitters are typically added to an existing
network rather than deployed in an empty environment.
$\IASPA{\mathcal{P}}$ accommodates this setting
directly by fixing the pre-existing
transmitter set $T_{\mathrm{fixed}}$ and optimizing only the locations
of the newly added transmitters, as described in
Section~\ref{sec:systemmodel}.

To test this incremental-deployment capability in already substantial
networks, we consider the full reference deployments used in the two
simulation environments and add only three transmitters. In each case,
$\IASPA{\mathcal{P}_{\mathrm{MAX}}}$ is
compared with the mean
performance obtained from ten independent trials in which three
additional transmitters are selected randomly from the feasible
candidate set.

\subsubsection{San Francisco: 16-Transmitter AT\&T + T-Mobile Network}
\label{sec:incremental_sf}

We initialize the San Francisco network with the union of the nine
AT\&T and seven T-Mobile transmitters used in the preceding
experiments, giving $16$ pre-existing transmitters. Three additional
transmitters are then selected using $\IASPA{\mathcal{P}_{\mathrm{MAX}}}$ or random placement.

\begin{figure}[t]
    \centering
    \includegraphics[width=0.75\linewidth]
    {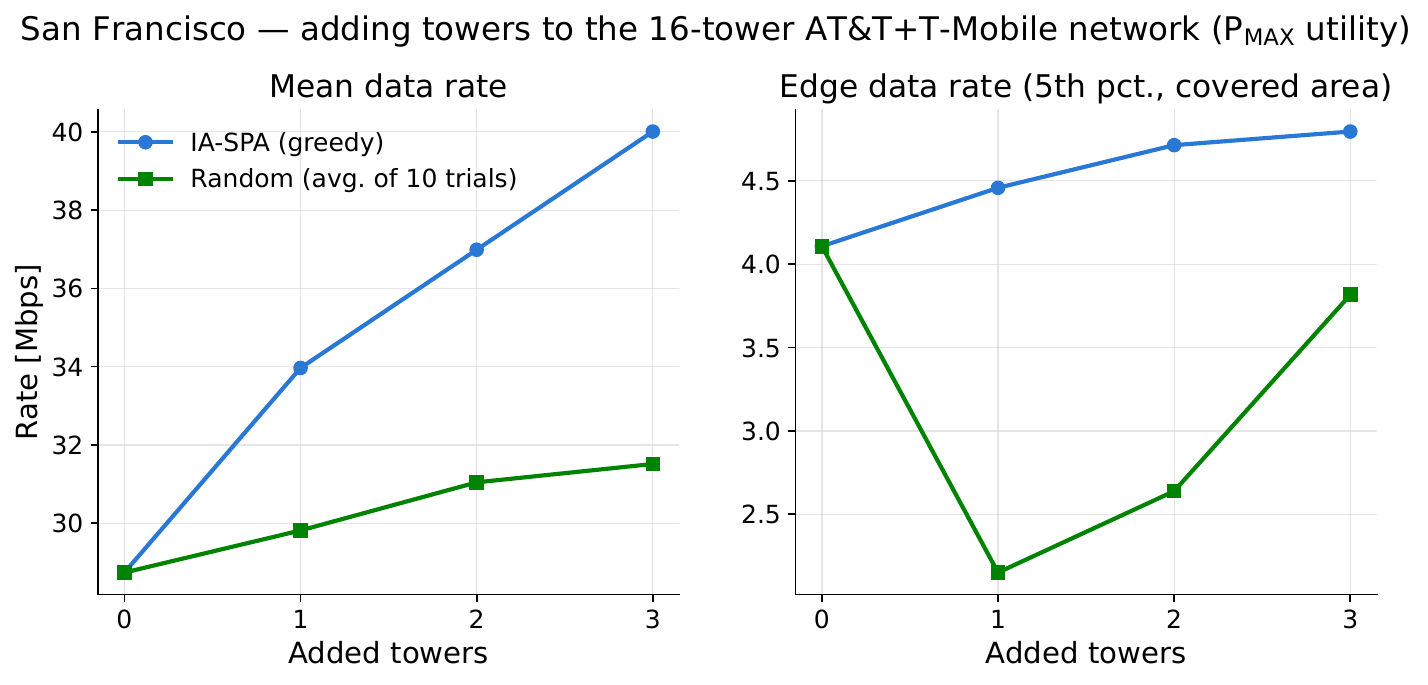}
    \caption{Incremental deployment in San Francisco starting from the
    existing 16-transmitter AT\&T+T-Mobile network. Mean data rate
    (left) and edge rate (right) are shown as transmitters are added
    using $\IASPA{\mathcal{P}_{\mathrm{MAX}}}$ or
    random placement.
    The random curve is the mean over ten independent trials.}
    \label{fig:exp5_sf}
\end{figure}

\begin{table}[t]
\centering
\renewcommand{\arraystretch}{1.2}
\caption{Incremental deployment in San Francisco after adding three
transmitters to the existing 16-transmitter AT\&T+T-Mobile network
using $\IASPA{\mathcal{P}_{\mathrm{MAX}}}$.}
\label{tab:exp5_sf}
\begin{tabular}{@{}lrrr@{}}
\toprule
\textbf{Metric}
& \textbf{Random}
& \textbf{IA-SPA}
& \textbf{Change} \\
\midrule
\textit{Data Rate [Mbps]} & & & \\
Mean Data Rate
& 31.51 & 40.01 & +26.96\% \\
Std.\ Dev.
& 25.63 & 31.88 & +24.41\% \\
Maximum Data Rate
& 167.92 & 177.85 & +5.91\% \\
Edge Rate (5\%-tile)
& 3.82 & 4.80 & +25.66\% \\
\midrule
\textit{Interference [nW]} & & & \\
Mean Interference
& 30.41 & 28.42 & -6.52\% \\
Maximum Interference
& 20500.09 & 20490.09 & -0.05\% \\
\bottomrule
\end{tabular}
\end{table}

Figure~\ref{fig:exp5_sf} and Table~\ref{tab:exp5_sf} show that the
benefit of $\IASPA{\mathcal{P}_{\mathrm{MAX}}}$ persists when transmitters are added to the existing
network. Both placement strategies start from the same 16-transmitter
baseline, whose mean rate is approximately $28.7$ Mbps. After three
additional transmitters, $\IASPA{\mathcal{P}_{\mathrm{MAX}}}$ achieves a mean rate of $40.0$ Mbps,
compared with $31.5$ Mbps for the random-placement average, a
$27.0\%$ improvement. The corresponding edge-rate improvement is
$25.7\%$.

The random-placement average also shows a substantial decrease in edge
rate after the first added transmitter before partially recovering
with subsequent additions. Thus, random densification can degrade edge
performance in an existing interference-limited network, whereas the
$\IASPA{\mathcal{P}_{\mathrm{MAX}}}$ sequence increases the edge rate at each of the three additions
considered here.

\subsubsection{Florence: 33-Transmitter Iliad Network}
\label{sec:incremental_fl}

We repeat the experiment in Florence, starting from the existing
33-transmitter Iliad deployment and adding three further transmitters.

\begin{figure}[t]
    \centering
    \includegraphics[width=0.75\linewidth]
    {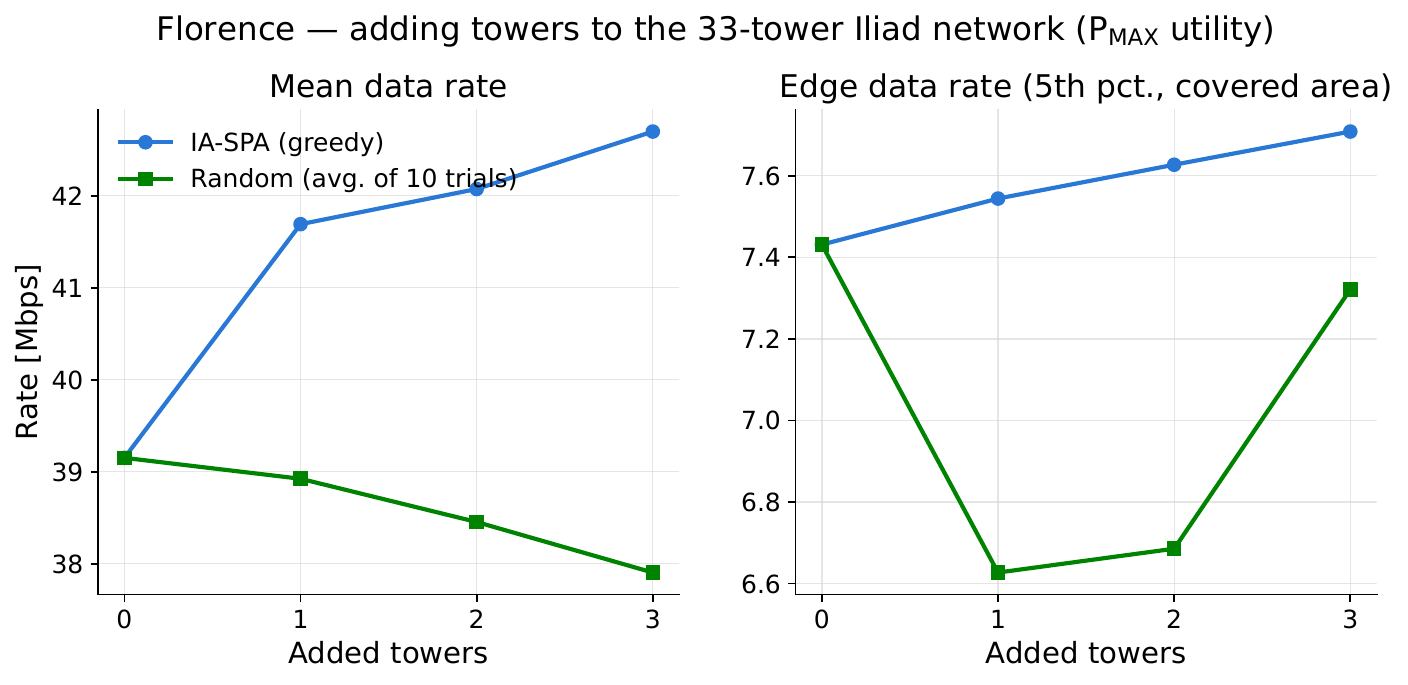}
    \caption{Incremental deployment in Florence starting from the
    existing 33-transmitter Iliad network. Mean data rate (left) and
    edge rate (right) are shown as transmitters are added using
    $\IASPA{\mathcal{P}_{\mathrm{MAX}}}$ or
    random placement. The random
    curve is the mean over ten independent trials.}
    \label{fig:exp5_fl}
\end{figure}

\begin{table}[t]
\centering
\renewcommand{\arraystretch}{1.2}
\caption{Incremental deployment in Florence after adding three
transmitters to the existing 33-transmitter Iliad network using
$\IASPA{\mathcal{P}_{\mathrm{MAX}}}$.}
\label{tab:exp5_fl}
\begin{tabular}{@{}lrrr@{}}
\toprule
\textbf{Metric}
& \textbf{Random}
& \textbf{IA-SPA}
& \textbf{Change} \\
\midrule
\textit{Data Rate [Mbps]} & & & \\
Mean Data Rate
& 37.90 & 42.70 & +12.65\% \\
Std.\ Dev.
& 25.19 & 28.89 & +14.66\% \\
Maximum Data Rate
& 194.95 & 203.53 & +4.40\% \\
Edge Rate (5\%-tile)
& 7.32 & 7.71 & +5.29\% \\
\midrule
\textit{Interference [nW]} & & & \\
Mean Interference
& 103.68 & 94.16 & -9.18\% \\
Maximum Interference
& 5086.24 & 5064.36 & -0.43\% \\
\bottomrule
\end{tabular}
\end{table}

The Florence experiment starts from a considerably denser reference
deployment. Figure~\ref{fig:exp5_fl} shows that the mean rate under
random placement decreases from approximately $39.2$ Mbps for the
33-transmitter baseline to $37.9$ Mbps after three additions. In
contrast, $\IASPA{\mathcal{P}_{\mathrm{MAX}}}$ increases the mean rate to $42.7$ Mbps, yielding a
$12.7\%$ advantage over the random-placement average. $\IASPA{\mathcal{P}_{\mathrm{MAX}}}$ also
achieves an edge rate of $7.71$ Mbps and approximately $9.2\%$ lower
mean interference than random placement.

The decrease observed under random placement is consistent with the
fact that indiscriminate densification can introduce additional
interference without a commensurate improvement in useful coverage.
In contrast, $\IASPA{\mathcal{P}_{\mathrm{MAX}}}$ selects additions according to their marginal
network-quality gain. Together with the San Francisco experiment,
these results show that the incremental formulation remains effective
when only a small number of new transmitters are added to already
substantial existing deployments.

\section{Conclusion}\label{sec:conclusion}
This paper introduced the Interference-Aware Submodular Placement Algorithm (IA-SPA), a framework for transmitter placement that combines detailed, site-specific propagation information with a mathematically tractable optimization formulation. By introducing an aggregated network-quality functional and establishing its submodularity under practical conditions, we obtain a greedy placement algorithm with a provable approximation guarantee relative to the global optimum of the surrogate objective. The framework accommodates prohibited deployment regions and existing infrastructure, making it applicable to both clean-slate and incremental network deployment.

Numerical experiments using high-fidelity ray tracing in San Francisco
and Florence demonstrate substantial improvements in mean and edge-user
data rates relative to existing operator deployments, while also
reducing interference. These gains persist under multiple exclusionary
zones, incremental densification of already dense networks, small-scale
fading, and perturbations of material properties, obstacle geometry,
and street-level clutter. In particular, the robustness experiments
indicate that the performance advantage obtained using the nominal
propagation model is not sensitive to the forms of model uncertainty
considered here. The computational analysis further shows linear
dependence on the number of candidate sites for fixed receiver
discretization and deployment size, while the measured runtimes
demonstrate practical feasibility at the city scale considered here.
Taken together, these results demonstrate that detailed site-specific
propagation models can be incorporated into a computationally tractable
and theoretically grounded framework for transmitter placement. 

Several extensions are natural. Heterogeneous, spatially varying deployment costs could account for real-estate and permitting costs as well as different types of base stations. Placement could also be coupled with operational variables such as transmit power and scheduling in response to time-varying demand. Finally, evaluating the framework across a broader range of urban environments and propagation regimes would help establish its applicability beyond the two cities considered here.

\bibliographystyle{IEEEtran}
\bibliography{bibliography, Andrews}

\end{document}